
\if0
\documentclass[sigconf,prologue,table]{acmart}
\usepackage[false]{anonymous-acm}
\else
\documentclass[sigconf,prologue,table,nonacm]{acmart}
\usepackage[false]{anonymous-acm}
\fi
\settopmatter{authorsperrow=4}

\AtBeginDocument{%
  }


\copyrightyear{2024} 
\acmYear{2024} 

\usepackage{enumitem}
\usepackage{amsthm}
\usepackage{physics}
\usepackage{amsfonts, fontawesome}
\usepackage{graphicx}
\usepackage[font={footnotesize}]{caption}
\usepackage[font={footnotesize}]{subcaption}
\usepackage{xcolor}
\definecolor{darkgrey}{RGB}{70,70,70}
\definecolor{lightgrey}{RGB}{200,200,200}
\definecolor{lyellow}{RGB}{255,255,100}
\definecolor{llyellow}{RGB}{250,250,180}
\definecolor{lgreen}{RGB}{144,238,144}



\usepackage{colortbl}
\newtheorem{definition}{Definition}

\usepackage[createShortEnv]{proof-at-the-end}

\usepackage{float}
\usepackage{color-edits}
\usepackage{algorithm}
\usepackage[noend]{algpseudocode}
\usepackage{multirow}
\usepackage{array}
\newcolumntype{P}[1]{>{\centering\arraybackslash}p{#1}}
\newcolumntype{M}[1]{>{\centering\arraybackslash}m{#1}}
\usepackage{bm}
\usepackage{graphicx}
\usepackage{graphbox}

\usepackage{enumitem}

\usepackage{makecell}

\usepackage{color-edits}
\addauthor[Kartik]{kl}{cyan}
\addauthor[Fabrizio]{fp}{magenta}
\addauthor[Laura]{lm}{red}
\addauthor[Kelly]{ki}{orange}
\addauthor[Maciej]{mb}{brown}
\addauthor[Torsten]{mb}{purple}

\newcommand{\fly}{PolarFly }

\newcommand{\newfly}{PolarStar }
\newcommand{\newflyN}{PolarStar}
\newcommand{\newSuper}{Inductive-Quad }
\newcommand{\newSuperN}{Inductive-Quad}
\newcommand{\faY}[0]{\faBatteryFull}
\newcommand{\faH}[0]{\faBatteryHalf}
\newcommand{\faN}[0]{\faTimes}

\newtheorem{theorem}{Theorem}
\newtheorem{corollary}{Corollary}
\newtheorem{lemma}{Lemma}

\newtheorem{proposition}{Proposition}

\newenvironment{propertyp}[1]{
	
	\propertyalt
}{\endpropertyalt}



\settopmatter{printacmref=true}
\begin{document}
\pagenumbering{gobble}

\if0
\fancyfoot[C]{\thepage} 
\fi
\title{PolarStar: Expanding the Horizon of Diameter-3 Networks}

\thanks{
This research is, in part, based upon work supported by the Office of the Director of National Intelligence (ODNI), Intelligence Advanced Research Projects Activity (IARPA), through the Advanced Graphical Intelligence Logical Computing Environment (AGILE) research program, under Army Research Office (ARO) contract number W911NF22C0081. The views and conclusions contained herein are those of the authors and should not be interpreted as necessarily representing the official policies or endorsements, either expressed or implied, of the ODNI, IARPA, ARO, or the U.S.  Government.
This work was also supported by the U.S. Department of Energy through the Los Alamos National Laboratory, operated by Triad National Security, LLC, for the National Nuclear Security Administration of U.S. Department of Energy (Contract No. 89233218CNA000001), and by the U.S. DOE LDRD program at Los Alamos National Laboratory under project number 20230692ER. The U.S. Government retains an irrevocable, nonexclusive, royalty-free license to publish, translate, reproduce, use, or dispose of the published form of the work and to authorize others to do the same. 
This paper has been assigned the LANL identification number LA-UR-22-30347, ver. 2.
}
\authoranon{
\author{Kartik Lakhotia}
\email{kartik.lakhotia@intel.com}
\affiliation{\institution{Intel}
\city{Santa Clara}
\state{CA}
\country{USA}
}

\author{
Laura Monroe}
\email{lmonroe@lanl.gov}
\affiliation{\institution{Los Alamos National Laboratory}
\city{Los Alamos}
  \state{NM}
\country{USA}
}

\author{
Kelly Isham}
\email{kisham@colgate.edu}
\affiliation{\institution{Colgate University}
\city{Hamilton}
\state{NY}
\country{USA}
}

\author{
Maciej Besta}
\email{mbesta@inf.ethz.ch}
\affiliation{\institution{ETH Z\"{u}rich}
\city{Zurich}\country{Switzerland}
}

\author{
Nils Blach}
\email{nils.blach@inf.ethz.ch}
\affiliation{\institution{ETH Z\"{u}rich}
\city{Zurich}\country{Switzerland}
}

\author{
Torsten Hoefler}
\email{torsten.hoefler@inf.ethz.ch}
\affiliation{\institution{ETH Z\"{u}rich}
\city{Zurich}\country{Switzerland}
}

\author{Fabrizio Petrini}
\email{fabrizio.petrini@intel.com}
\affiliation{\institution{Intel}
\city{Santa Clara}
\state{CA}
\country{USA}
}

\renewcommand{\shortauthors}{Kartik Lakhotia et al.}
}
\begin{abstract}
We present PolarStar, a novel family of diameter-3 network topologies derived from the star product of low-diameter factor graphs. 

PolarStar gives the largest known diameter-3 network topologies for almost all radixes, thus providing the best known scalable diameter-$3$ network.
Compared to current state-of-the-art diameter-$3$ networks, PolarStar achieves $1.3\times$ geometric mean increase in scale over Bundlefly, $1.9\times$ over Dragonfly, and $6.7\times$ over {3-D} HyperX.  
PolarStar has many other desirable properties, including a modular layout, large bisection, high resilience to link failures and a large number of feasible configurations for every radix. 

We give a detailed evaluation with simulations of synthetic and real-world traffic patterns and show that PolarStar exhibits comparable or better performance than current diameter-3 networks.
\end{abstract}

\begin{CCSXML}
<ccs2012>
<concept>
<concept_id>10003033.10003034</concept_id>
<concept_desc>Networks~Network architectures</concept_desc>
<concept_significance>500</concept_significance>
</concept>
<concept>
<concept_id>10003033.10003079</concept_id>
<concept_desc>Networks~Network performance evaluation</concept_desc>
<concept_significance>500</concept_significance>
</concept>
<concept>
<concept_id>10003033.10003106.10003110</concept_id>
<concept_desc>Networks~Data center networks</concept_desc>
<concept_significance>500</concept_significance>
</concept>
<concept>
<concept_id>10002950.10003624.10003633</concept_id>
<concept_desc>Mathematics of computing~Graph theory</concept_desc>
<concept_significance>500</concept_significance>
</concept>
<concept>
<concept_id>10002950.10003624.10003633.10003646</concept_id>
<concept_desc>Mathematics of computing~Extremal graph theory</concept_desc>
<concept_significance>500</concept_significance>
</concept>
</ccs2012>
\end{CCSXML}

\ccsdesc[500]{Networks~Network architectures}
\ccsdesc[500]{Networks~Network performance evaluation}
\ccsdesc[500]{Networks~Data center networks}
\ccsdesc[500]{Mathematics of computing~Graph theory}
\ccsdesc[500]{Mathematics of computing~Extremal graph theory}
\keywords{Networks; High-Performance Networks; Network Topology; Network Evaluation; Extremal Graph Theory}



\maketitle
\section{Introduction}
\label{sec:intro}
\subsection{Motivation
}
The largest current supercomputers contain tens to hundreds of thousands of processing nodes. For example,
Frontier, the most powerful system in the Top $500$ list 
\cite{top500}, has $9,408$ CPUs and $37,632$ GPUs~\cite{frontier}. 
The fourth-ranked supercomputer, Fugaku, has $158,976$ processing nodes~\cite{Fugaku:Dongarra}.
Future supercomputers will be even larger, since the size of the system determines peak compute performance. 

Equally important is the diameter, as this {affects communication latency and injection bandwidth per \looseness=-1switch}.
Low-diameter networks, especially of diameters $2$~\cite{diameter-2-topos} and $3$~\cite{dally08}, are of great interest, providing low-latency and cost-effective high-bandwidth communication infrastructure. 	

We address here this question: what is the largest diameter-3 network topology that can be built using switches of a given radix? 

Current technological advances make this question especially timely.
The emergence of high-radix optical IO modules with high shoreline density has increased interest in scalable low-diameter networks~\cite{3d-multichip:Bergman, wade2020teraphy, darpa-eri-2019, lightmatter}.
Co-packaging of these with compute nodes on the same chip greatly enhances available bandwidth per node. Each packet consumes bandwidth on only a few links, limiting negative effects of tail latency and improving system performance~\cite{Dean:tail-latency}. 

Low-diameter networks are then required to efficiently utilize bandwidth on co-packaged chips.
Since each router is integrated with a compute 
node, scalability of a co-packaged system 
is identified
with the order of the graph defining the network \looseness=-1topology.
	
\subsection{Related Work}
{Given the importance of low-diameter topologies and the technological constraints on switch radix, 
it makes sense to design networks with the largest order~(number of nodes) possible given diameter $D$ and router degree $d$. This is the \emph{degree-diameter problem}~\cite{comb_wiki_degdiam_general}, and}
the order of $G$ is bounded above by the Moore bound~\cite{hoffmansingleton1960}. Networks with good \emph{Moore-bound 
efficiency}~(proximity to Moore bound) are not only highly scalable, but also cost-effective and power-efficient as they can 
realize a system of given size
with relatively lower radix switches and fewer cables.
Unfortunately, 
the largest known 
graphs 
for $D>2$ and $d>2$
are much smaller in size than the Moore bound.

\begin{figure}[htbp]
    \centering
    \includegraphics[width= 1\linewidth]{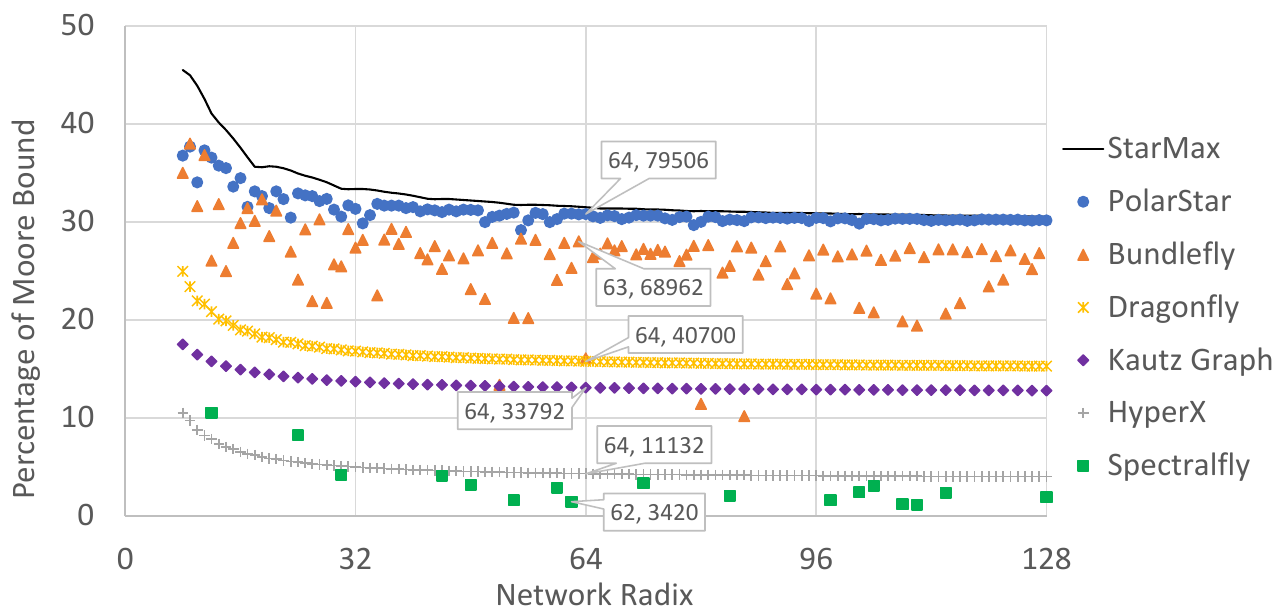}
    \caption{Scalability of direct diameter-3 topologies with respect to the Moore bound. Data labels show the largest number of nodes and corresponding radix in each topology for radix $\leq 64$. 
    \emph{StarMax} denotes an upper bound on the largest graphs theoretically achievable with $P-$ and $R-$star products -- the 
    mathematical constructs used in state-of-the-art Bundlefly and \newfly networks.
    {For Spectralfly, which is not a fixed diameter topology, we only compare  design points with diameter $\leq 3$ and largest scale for a given radix~(if it exists). For Kautz networks, we consider each link as bidirectional.}
    }
    \label{fig:MB_comparison}
    \Description{Moore-bound comparison between different networks}
\end{figure}
Some diameter-2 networks such as Polarfly~\cite{polarfly_sc22} 
and Slimfly~\cite{besta2014slim} approach the Moore bound. However, 
their scale is
limited by the small Moore-bound for diameter-2~($d^2+1$ for radix $d$).  
Since diameter-$2$ networks only span a few thousand nodes for feasible radixes, they are not suitable for massive-scale datacenters and HPC systems.

On the other hand, diameter-3 networks have a high enough Moore bound to address scalability requirements ($d^3 -d^2 +d +1$ for radix $d$). Dragonfly~\cite{dally08} and HyperX~\cite{ahn2009hyperx} are popular diameter-3 topologies deployed in large systems~\cite{lakhotia2021accelerating, frontier}, but these topologies exhibit poor Moore-bound efficiency, as shown in Figure~\ref{fig:MB_comparison},
which drives up the network cost and power consumption. 
Kautz networks~\cite{li2004graph} used in SiCortex's fabric~\cite{stewart2006new}, are directed topologies with near Moore-bound scalability for directed graphs. However, almost all deployed off-chip networking systems~\cite{foley2017ultra, pfister2001introduction, birrittella2015intel, kachris2012survey, torudbakken201350tbps} and interconnects~\cite{dally08, kim2007flattened, ahn2009hyperx, leiserson1985fat} feature bidirectional links. Treating each link of a Kautz network as bidirectional doubles the degree of nodes,  resulting in a higher Moore-bound and lower efficiency. For diameter-3, bidirectional Kautz networks have $<13\%$ asymptotic Moore-bound efficiency, which is $2.4\times$ smaller than PolarStar.

Recently, Lei et al. introduced Bundlefly~\cite{lei2020bundlefly}, a 
modular diameter-3 network based on star product of two graphs. Its 
inter-module links can be bundled into multi-core fibers to reduce cabling 
cost. However, its Moore-bound efficiency
varies, as shown in Figure~\ref{fig:MB_comparison}.
    
Creating largest diameter-$3$ graphs is an open problem in mathematics.
There is a big gap between the best-known diameter-$3$ graphs and the 
Moore bound. The Combinatorics Wiki leaderboard~\cite{comb_wiki_degdiam_general} lists some of the largest diameter-3 graphs as of 2013, up to radix $20$. 
The construction in \cite{bermond82}, also utilized by state-of-the-art Bundlefly~\cite{lei2020bundlefly}, surpasses the leaderboard for radixes $18-20$. To the best of our knowledge, the construction in \cite{bermond82} also gives the largest known graphs for most radixes $>20$, before our construction. 

In this paper, we construct a new family of diameter-3 graphs that improve upon the previous best~\cite{bermond82, comb_wiki_degdiam_general} for almost all radixes $\geq18$, achieving a $1.3\times$ geometric mean higher scale than Bundlefly~\cite{lei2020bundlefly}. 
 \subsection{Contributions 
}
We propose a new family of network topologies called \newfly that extends PolarFly~\cite{polarfly_sc22} to large diameter-3 networks using a mathematical construct called the star product.
\begin{itemize}[itemsep=0pt,parsep=2pt,leftmargin=*]     
    \item \newfly gives the \emph{largest known diameter-$3$ direct networks}
    for almost all radixes, achieving $1.3\times, 1.9\times$ and $6.7\times$ geometric mean increase in scale over Bundlefly, Dragonfly and HyperX. The base graphs are some of the \emph{largest known diameter-3 graphs.}
    
    \item \newfly reaches \emph{near-optimal scalability} for diameter-3
    star products that can be constructed with the currently known properties of the factor graphs that give low-diameter star-products. 
    Further optimizations on these diameter-limited star products are unlikely to provide notable benefits. 

    \item \newfly \emph{extends several networking benefits of PolarFly}, including a modular layout amenable to bundling of links into multi-core fibers and a 
    large bisection cut. 
    
    \item \newfly has a \emph{large design space} and it exists with
    multiple configurations
    for every radix in $[8,128]$.  
    \item We design a \emph{routing mechanism} for \newfly using its theoretical properties, and show \emph{comparable or better performance than existing diameter-3 networks on synthetic and real-world motifs}.

\end{itemize}

\section{Background}
\label{sec:back}
\subsection{Network Model}
In direct networks, each switch is directly linked to or integrated with a compute endpoint.
Hence, the topology of a direct network can be modeled as an undirected graph $G(V,E)$ where $V(G)$ is
the set of switching nodes, or vertices, 
$\abs{V(G)}$ 
is the order of $G$, and $E$ is the set of links, or edges.
Each node has $d$ links to other
nodes where $d$ is the  \emph{network radix}, or \emph{degree}.
The maximum length of shortest paths between
any node pair is the
\emph{diameter}~$D$. 
In this paper, we consider networks of diameter $3$.
\subsection{The Moore Bound}\label{sec:moore_bound}
Moore bound \cite{hoffmansingleton1960} is an upper bound on the number of nodes $N$ that a graph of degree $d$ and diameter $D$ may have. This bound is
\begin{equation*}\label{moore_bd}
N \le 1+d\cdot\sum_{i=0}^{D-1}(d-1)^i.
\end{equation*}
For diameter-$3$, the Moore bound is
$
N \le d^3 - d^2 + d + 1.
$

The only graphs with $D\geq 2$ and $d\geq 2$ that achieve the Moore bound are cycles, the Hoffman-Singleton graph, the Petersen graph \cite{hoffmansingleton1960,Bannai1973OnFM,Damerell1973OnMG} and a hypothetical diameter-$2$ degree-$57$ graph \cite{Dalfo20191full}.
These graphs are not suited for large-scale network design: degree-2 cycles with low diameter are very small, and the others have only one design point each.
Few graphs even come close to the Moore bound. The latest leaderboard of degree-diameter problem from~\cite{comb_wiki_degdiam_general} shows that 
the best known diameter-3 graphs, 
with the most relevant degrees, reach only $25-30\%$ of the Moore bound.
The \newfly construction proposed here is larger than the best known graphs, which are discussed in \cite{bermond82} and \cite{lei2020bundlefly}.
\subsection{Network Properties}\label{sec:desirable_properties}
In this section, we compare PolarStar with  Slimfly~\cite{besta2014slim}, PolarFly~\cite{polarfly_sc22}, Dragonfly~\cite{dally08}, HyperX~\cite{ahn2009hyperx}, MegaFly~\cite{flajslik2018megafly, shpiner2017dragonfly+}, SpectralFly~\cite{aksoy2021spectralfly}, Bundlefly~\cite{lei2020bundlefly} and
Fat-trees, on 
the basis of several desirable network properties. 
This comparison may be seen in Table~\ref{tab:properties}.

\begin{table}[ht] 
\setlength{\tabcolsep}{3.5pt}
\centering
\footnotesize
\renewcommand{\arraystretch}{0.6}
\resizebox{\linewidth}{!}{%
\begin{tabular}{lccccc@{}}
\toprule
\textbf{Topology} & \makecell[c]{\textbf{Direct}} & \textbf{Scalability} & 
\makecell{\textbf{Stable }\textbf{Design-space}} 
& \textbf{$D\leq 3$} & \textbf{Bundlability}  \\
\midrule
Fat-tree & \faN & \faY & \faY & \faN & \faY\\

PolarFly & \faY & \faN & \faH & \faY & \faY\\

Slimfly & \faY & \faN & \faH & \faY & \faY \\

3-D HyperX & \faY & \faH & \faY & \faY  & \faY \\

Dragonfly & \faY & \faY & \faY & \faY & \faH \\

Bundlefly & \faY & \faY & \faH & \faY & \faY\\

{Megafly} & \faN & \faY & \faY & \faY & \faH\\

{Spectralfly} & \faY & \faH & \faH & \faY & \faH\\
\midrule

\textbf{\newfly} & \faY & \faY  & \faY & \faY & \faY\\
\bottomrule
\end{tabular}
}
\caption{
Network Properties assessment: battery levels represent network's standing in terms of the corresponding property ``\faY'': very good, ``\faH'': fair, ``\faN'': not good.
$D$ is Network Diameter}
\label{tab:properties}
\end{table} 
\noindent\textbf{\underline{Directness}:}
Every switch in a direct network is attached to one or more endpoints, 
whereas indirect networks have some switching nodes not attached to any endpoint.
Directness is a desirable property; for example --
{if co-packaged modules are used -- \emph{indirect} networks such as Fat-tree and MegaFly require fabricating two types of chips, which increases their cost. 
Further, a switch-only chip in these topologies theoretically requires twice the number of ports than a co-packaged chip with an endpoint.}

\noindent\textbf{\underline{Scalability}:} Absolute scale achieved by a topology depends upon the Moore bound for its diameter and its Moore-bound efficiency. 
Diameter-2 networks 
such as PolarFly~\cite{polarfly_sc22}
approach the Moore bound but are limited in 
scale as the bound is small for diameter $2$. Three-level Fat-trees scale similarly to diameter-2 networks 
but additional levels can be used to increase scalability.

Diameter-3 networks have a large Moore bound that can cover hundreds of thousands of nodes with common switch radixes. 
However, known diameter-3 networks such as HyperX and SpectralFly have poor Moore-bound efficiency. PolarStar has the highest Moore-bound efficiency among diameter-3 network topologies.

\noindent\textbf{\underline{Low-diameter}:}
Networks with small diameter enable low-latency remote accesses
and can sustain high ingestion bandwidth per switch. 
Diameter $\leq 3$ is
preferred as it provides both performance and scalability at low cost.

\noindent\textbf{\underline{Stable Design-Space}:} A desirable
topology provides several configurations for all radixes,
and provides stable scaling (consistent Moore-bound efficiency without jitters). This allows system design at various scales (dictated by customer demands) with limited choices for switch radixes (constrained by redesign costs, availability etc.). Slimfly and PolarFly~\cite{besta2014slim}
have few feasible radixes and configurations.
Bundlefly's~\cite{lei2020bundlefly} Moore-bound efficiency fluctuates significantly with the radix (Figure~\ref{fig:MB_comparison}). PolarStar features several configurations for every radix and has a more stable Moore-bound efficiency.
{Although SpectralFly can be constructed for many radixes, it has diameter-$3$ for very few radixes, as shown in Figure~\ref{fig:MB_comparison}.}

\noindent\textbf{\underline{Bundlability}:} 
Bundlefly~\cite{lei2020bundlefly} introduced the concept of bundling friendly networks as modular topologies with  
multiple links between adjacent modules~(logical groups of nodes) that can be packed together in a Multi-Core Fiber~\cite{awaji2013optical}. 
This reduces cabling cost and complexity. Both Bundlefly and PolarStar are amenable to bundling as per this description.
{The largest Dragonfly and Megafly constructions are less bundling friendly -- there is one link between each pair of node groups. Multiple links can be used between adjacent groups to make these topologies bundlable~\cite{frontier}. However,} {doing so} {reduces their scalability and Moore-bound efficiency proportionally.} 
Additionally, large racks that can accommodate multiple logical groups can also enable bundling inter-rack cables.
\section{Approach}
The \emph{star product} is a non-commutative product 
of two factor graphs, presented by Bermond, Delorme and Farhi in \cite{bermond82}. 
Under certain conditions, the product can achieve a lower diameter than the sum of the diameters of the factors. Lei et al. used a star-product construction for Bundlefly \cite{lei2020bundlefly}. 

In this paper, we also use the star product, but devise new properties for our factor graphs that differ from the original ones from \cite{bermond82} used for Bundlefly. These new properties permit larger factor graphs of our own design, and allow us to construct the \newfly star-product topology, which is larger than Bundlefly and all other previously known diameter-3 networks. 

The factor graphs we use not only have large order but also have many other qualities that carry over to the star product, which are useful for networking. We analyze these and the networking characteristics of PolarStar throughout the paper.

\section{Star Products}
\subsection{Intuition for the Star Product}
The star product generalizes the Cartesian product $G \times H$.
In the Cartesian product, copies of $H$, called \emph{supernodes}, replace vertices of \emph{structure graph} $G$. Edges join corresponding vertices in any two copies $H_1$ and $H_2$, when $H_1$ and $H_2$ are joined by an edge in $G$. Figure~\ref{fig:const_cart} illustrates this.

The edges of a star product are also determined by bijections between vertices in neighboring supernodes. In this case, however, the requirement that edges be constructed between corresponding vertices is relaxed. The bijections need not even be the same between different pairs of neighboring supernodes. Carefully selecting the bijective connections may provide a product of diameter at most 1 more than that of the structure graph.
We will later discuss how the selection of bijections permits this constrained diameter.

Intuitively, the star product retains the large-scale form of the structure graph $G$
and embeds copies of the supernode $H$ in place of vertices of $G$. The supernodes are then connected bijectively, with edges as desired. 
Figure~\ref{fig:star_example} compares a Cartesian product with an example star product on the same factor graphs with its large-scale structure graph and supernode embeddings and connectivity. Figure~\ref{fig:star_ps} shows our star product construction.  
\begin{figure}[!htbp]
    \centering
    \begin{subfigure}{1\linewidth}
      \centering
      \includegraphics[width=.8\linewidth]{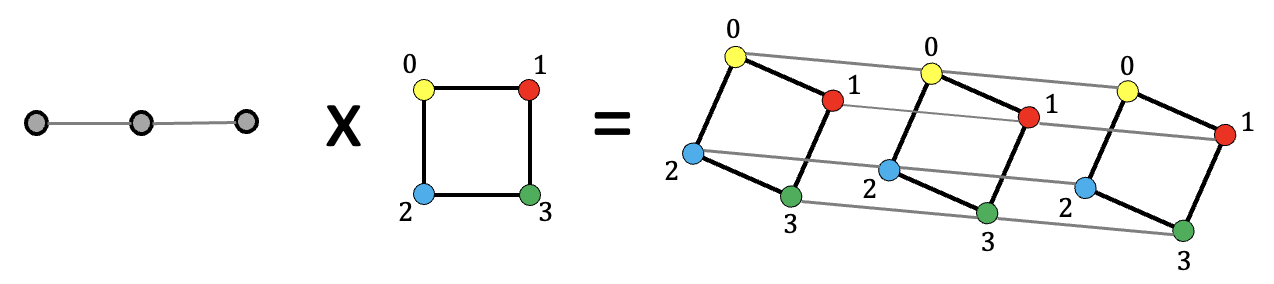}
      \caption{The Cartesian product $L_3 \times C_4$. Corresponding nodes in neighboring copies of $C_4$ are joined by an edge. The bijection is simply the identity.}
      \label{fig:const_cart}
    \end{subfigure}
    \begin{subfigure}{1\linewidth}
      \centering
      \includegraphics[width=.8\linewidth]{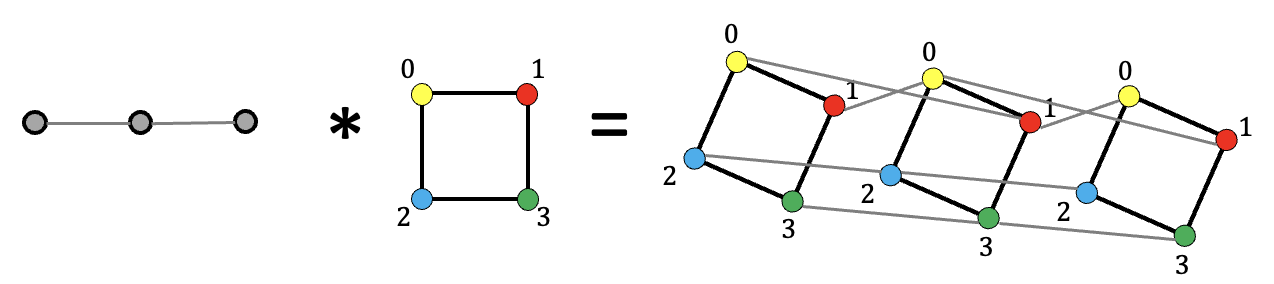}
      \caption{An example star product $L_3 * C_4$.
      In this particular star product, neighboring supernodes are all joined with the same bijection $f=(0 1)(2)(3)$.
      }
      \label{fig:const_star}
    \end{subfigure}
\caption{A comparison of the Cartesian product $L_3 \times C_4$ with an example star product $L_3 * C_4$. The structure graph $L_3$ is the path graph on three vertices, and the supernode $C_4$ is the cycle graph on four vertices. }
\label{fig:star_example}
\Description{Comparison of Cartesian product with star product}
\end{figure}

\begin{figure*}[t]
\centering
\begin{subfigure}[t]{0.22 \textwidth}
\centering
\includegraphics[height=.7\columnwidth]{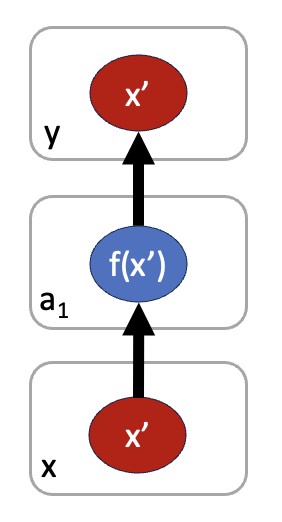}
%
\caption{$y'=x'$: An $x'$-alternating path of length $2$ through the structure graph.}
\label{fig:diameter_setup}
\end{subfigure}
\quad
\begin{subfigure}[t]{0.22 \textwidth}
\centering
\includegraphics[height=.95\columnwidth]{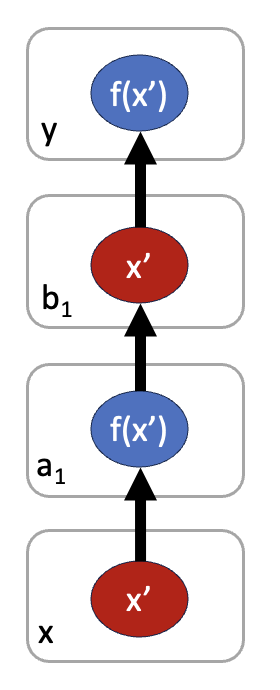}
\caption{$y'=f(x')$: An $x'$-alternating path of length $3$ through the structure graph, using the length-$2$ $x'$-alternating path between $a_1$ and $y$.}
\label{fig:diameter_p1}
\end{subfigure}
\quad
\begin{subfigure}[t]{0.22 \textwidth}
\centering
\includegraphics[height=.7\columnwidth]{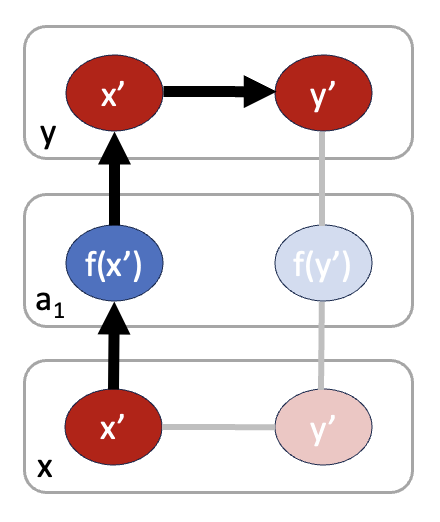}
\caption{$(x',y') \in E(G')$: $x'$- and $y'$-alternating paths of length $2$ through the structure graph, with a hop through supernode $y$ (or $x$).}
\label{fig:diameter_p2}
\end{subfigure}
\quad
\begin{subfigure}[t]{0.22 \textwidth}
\centering
\includegraphics[height=.7\columnwidth]{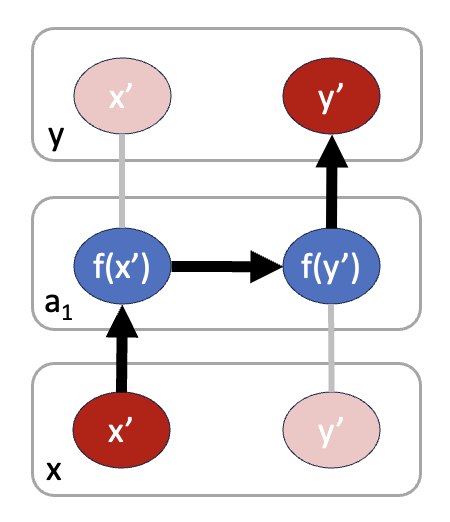}
\caption{$(f(x'),f(y')) \in E(G')$: $x'$- and $y'$-alternating paths of length $2$ through the structure graph, with a hop through supernode $a_1$.}
\label{fig:diameter_p3}
\end{subfigure}
\quad
\caption{This figure illustrates Theorem~\ref{thm_diam-3} on a diameter-2 structure graph with property $R$, showing how paths are constructed from $2$- and $3$-hop alternating paths. The star product path may include a $1$-hop supernode detour, shown in \ref{fig:diameter_p2} and \ref{fig:diameter_p3}, which is permitted by Property~\ref{prop_R_star} on $G'$. 
}
\label{fig:diameter_theorem}
\Description{A figure showing construction of diameter-(D+1) paths}
\end{figure*}

\subsection{Formal Definition}
The following formal Definition~\ref{def:star_product} is taken from \cite{bermond82}. In this paper, we call $G$ the \emph{structure graph} and $G'$ the \emph{supernode}.
\begin{definition}\label{def:star_product}~\cite{bermond82} 
    Let $G=(X,E)$ and $G'=(X',E')$ be two graphs . Take an arbitrary orientation of the edges of $G$ and let $U$ be the resulting set of arcs. For each arc $(x,y)$ in $U$, let $f_{(x,y)}: X'\rightarrow X'$ be a bijection. 
The star product $G_* = G*G'$ is defined as follows:
\begin{enumerate}[itemsep=0pt,parsep=0pt]
    \item The vertex set of $G_*$ is the Cartesian product $X\times X'$.
    \item Vertex $(x,x')$ is joined to vertex $(y,y')$ if and only if either
    \begin{enumerate}[itemsep=0pt,parsep=0pt]
        \item $x=y$ and $(x',y')\in E'$, or \item $(x,y)\in U$ and $y'=f_{(x,y)}(x')$.
    \end{enumerate}
\end{enumerate}
\end{definition}

In this definition, condition $2a$ replaces nodes of $G$ with copies of $G'$. Condition $2b$ joins copies of $G'$ to each other. The bijection $f_{(x,y)}$ in $2b$ is the rule for bijective connectivity from vertices of $G$ to vertices of $G'$ across the arc $(x,y)$.

\subsection{Order, Degree, and Diameter}\label{sec:genprops_star}
Using the following facts, one may construct a large diameter-constrained graph from two smaller graphs.
\begin{enumerate}[itemsep=0pt,parsep=2pt]
    \item The number of vertices is $\abs{V(G_*)}=\abs{V(G)}\abs{V(G')}$.

    \item If the maximum degrees in $G$ and $G'$ are $d$ and $d'$ respectively, the maximum degree of $G_*$ is $d_* \leq d + d'$.
    
    \item If the diameters of $G$ and $G'$ are $D$ and $D'$, respectively, then the diameter of $G_*$ is given by $D_* \leq D + D'$.
\end{enumerate}
Appropriate bijections for connectivity between adjacent supernodes can reduce the diameter $D_*$ to at most $D+1$. In this paper,
we will use the same bijection $f$ for all arcs of the structure graph, and construct a star product with this restricted diameter. In Section \ref{sec:construction_R_star_products}, we describe such a bijection.

\section{A Low-Diameter Star Product}
\label{sec:new_properties} Section~\ref{sec:genprops_star} 
gives upper bounds on degrees and diameters of star products. In Section~\ref{sec:properties_def}, we define properties on the factor graph that allow us to construct a star product with diameter at most $1$ more than that of the structure graph.

Bermond et al. describe properties P and P$^*$ on graphs $G$ and $G'$ giving large star products with minimal or no increase in diameter over that of $G$ \cite{bermond82}.
Our new properties \ref{prop_R} and \ref{prop_R_star} are similar in spirit 
but are not the same.

In particular, our \ref{prop_R_star} weakens P$^*$ by allowing any diameter, unlike P$^*$ which restricts diameter of $G'$ to $\le 2$. Our \ref{prop_R} strengthens P by requiring that all vertex pairs be joined by a path of length diameter $D$ and consequently, by a path of length $D+1$. In contrast, P only requires that vertices with a path of length $D$ between them be joined by a path of length $D+1$ as well.

Weakening P$^*$ permits us to design $IQ$, a $G'$ supernode graph with Property~\ref{prop_R_star} that can have diameter $>2$ and a larger order than previously known supernodes with Property P$^*$. Using $IQ$, and noting that Erd\H os-R\'enyi polarity graphs have Property~\ref{prop_R},
we produce a 
\emph{larger star product} having diameter 3. 

Using the new $R$ properties, 
we thus improve on the two largest known star-product graphs given in \cite{bermond82}, with $1.06\times$ and $1.21\times$ geometric mean improvement for radixes in the range $[8,128]$. 
In fact, this new family of diameter-$3$ graphs is
\emph{larger than any diameter-3 graphs previously designed} for almost all radixes.

More importantly for this community, PolarStar is much larger than state-of-the-art Bundlefly \cite{lei2020bundlefly}, which is a $P$-star-product based on \cite{bermond82}. Bundlefly does not use our 
larger supernode, nor does it use an optimal structure graph. \newfly thus achieves $1.3\times$ geometric mean larger scale than Bundlefly, for network radixes in $[8,128]$.
\subsection{Useful Factor-Graph Properties}\label{sec:properties_def}
\subsubsection[Structure-Graph Property R]{Structure-Graph Property $R$}
This first property and its associated proposition apply to the structure graph $G$, and highlight its \emph{path diversity}. 
We will later see that this diversity enables reachability in $D+1$ hops between all vertex pairs in the star product.
\begin{propertyp}{R}\label{prop_R}
A graph $G$ of diameter $D$ has Property~\ref{prop_R} if any vertex pair $x,y\in V(G)$ 
can be joined by a path of length~$D$.  
\end{propertyp}
Note that in the definition of Property~\ref{prop_R}, self-loops (if they exist in $G$) are permissible as part of the length-$D$ path.
\subsubsection[Supernode Properties R* and R1]{Supernode Properties $R^*$ and $R_1$}
The next properties, \ref{prop_R_star} and \ref{prop_R_1}, apply to the supernode $G'$. 
These properties address \emph{paths internal to  supernodes}. 
Given that paths between supernodes are defined by application of a bijection $f$ to supernode vertices, we may use these to define a short path through the star product.

\begin{propertyp}{R$^*$}\label{prop_R_star}Graph $G'$ 
satisfies Property~\ref{prop_R_star} if there is an involution $f$ so that for any $x',y' \in V(G')$, at least one of the following is \looseness=-1true: 
\begin{enumerate}[label=(\alph*),itemsep=1pt,parsep=2pt]    
    \item $y'=x'$
    \item $y'=f(x')$\label{prop_R_star_eq}
    \item $(x',y') \in E(G')$\label{prop_R_star_inE}
    \item $(f(x'),f(y')) \in E(G')$\label{prop_R_star_inEprime}
\end{enumerate}
\end{propertyp}
\if0{\begin{theoremEnd}[end, restate, text link={See proof
in Section~\ref{sec:appendix} (the Appendix).}, text proof = {}, ]{proposition}
\label{proposition:Rstar}
A graph $G'$ of degree $d'$ having Property~\ref{prop_R_star} has at most $2d'+2$ vertices.
\end{theoremEnd}
\begin{proof}
    \begin{proofEnd}
   Fix some vertex $y$ in $G'$. We will count how many other vertices there can be. Let $x$ be a vertex in $G'$. By Property~\ref{prop_R_star}, at least one of the following is true:
	\begin{enumerate}[itemsep=1pt,parsep=2pt]
	\item $y = x$, giving exactly one choice for $x$.
	\item $y= f(x)$: Since $f$ is a bijection, then there is exactly one choice for $x$, namely $f^{-1}(y)$.
	\item $(x,y) \in E(G')$, giving $\deg(y)$ choices for $x$.
	\item $(f(X), f(y)) \in E(G')$: since $f$ is a function, $f(y)$ is fixed. Therefore there are $\deg(f(y))$ choices for $x$.
   \end{enumerate}
   Altogether, we have at most $2 + \deg(y) + \deg(f(y))$ total vertices, assuming all conditions are disjoint. Since $\deg(y) \le d'$ and $\deg(f(y)) \le d'$, $G'$ can have at most $2+2d'$ vertices.
\end{proofEnd}
\end{proof}
}\fi

\begin{proposition}\label{proposition:Rstar}
    A graph $G'$ of degree $d'$ having Property~\ref{prop_R_star} has at most $2d'+2$ vertices.
\end{proposition}
\begin{proof}
   Fix some vertex $y$ in $G'$. We will count how many other vertices there can be. Let $x$ be a vertex in $G'$. By Property~\ref{prop_R_star}, at least one of the following is true:
	\begin{enumerate}[itemsep=1pt,parsep=2pt]
	\item $y = x$, giving exactly one choice for $x$.
	\item $y= f(x)$: Since $f$ is a bijection, then there is exactly one choice for $x$, namely $f^{-1}(y)$.
	\item $(x,y) \in E(G')$, giving $\deg(y)$ choices for $x$.
	\item $(f(x), f(y)) \in E(G')$: since $f$ is a function, $f(y)$ is fixed. Therefore there are $\deg(f(y))$ choices for $x$.
   \end{enumerate}
   Altogether, we have at most $2 + \deg(y) + \deg(f(y))$ total vertices, assuming all conditions are disjoint. Since $\deg(y) \le d'$ and $\deg(f(y)) \le d'$, $G'$ can have at most $2+2d'$ vertices.    
\end{proof}\begin{definition}
    Let $G*G'$ be a star product where $G'$ has Property~\ref{prop_R_star}. A \emph{$x'$-alternating path} in $G*G'$ is a path between supernodes of the form $((a_0,x'), (a_1,f(x')), (a_2,x'), \dots)$, where $(a_0, a_1, a_2, ...)$ is a path in $G$, and the second entries in the path elements alternate between $x'$ and $f(x')$.
\end{definition}
When $G'$ has Property~\ref{prop_R_star}, the alternating path is the only kind of path that exists between supernodes, by definition of the star product. This is because $f$ is an involution i.e. $f^2(x')=x'$.

Subproperties \ref{prop_R_star_inE} and \ref{prop_R_star_inEprime} of Property~\ref{prop_R_star} allow a hop from any alternating path to any other.
This is quite powerful, and is key to Theorem~\ref{thm_diam-3}, establishing the low diameter of these star products.
\begin{lemma}\label{lemma:altpaths}
    Let $G*G'$ be a star product where $G'$ has Property~\ref{prop_R_star}. For all $x' \in G'$, there is an alternating path in $G*G'$ for every path in $G$. All alternating paths are of this form.
\end{lemma}

\begin{proof}
    Follows from the definition of the bijective star product, and the fact that $f^2(x')=x'$ for all $x'$.
\end{proof}
The following Property \ref{prop_R_1} is Property $P_i$ in \cite{bermond82} for the special case $i=1$, which is needed for diameter-$3$ star products.

\begin{propertyp}{R$_1$}\cite{bermond82}\label{prop_R_1}  A graph $G'$  has Property~\ref{prop_R_1} if there is a bijection $f$, with $f^2$ an automorphism of $G'$, so that the set of edges 
$
E(G') \cup f(E(G'))
$
is the entire set of edges in the complete graph on \looseness=-1$V(G')$.
\end{propertyp}
\subsection{Star Products of Low Diameter}\label{sec:construction_R_star_products}
Here we give criteria for a star product $G*G'$ to have diameter at most $D+1$. This occurs when $G$ has diameter $D$, \emph{and either}
\begin{itemize}[itemsep=1pt,parsep=2pt]
        \item The structure graph $G$ has Property~\ref{prop_R} and the supernode $G'$ has Property~\ref{prop_R_star} (Theorem~\ref{thm_diam-3}), \emph{or}
        \item The supernode $G'$ has Property~\ref{prop_R_1} (Theorem~\ref{thm_R1},~\cite{bermond82}).
 \end{itemize}
We define a path in the star product from $(x,x')$ to $(y,y')$. The strategy here is to identify $x'$- and $y'$-alternating paths of length $D$ 
from supernodes $x$ (or neighbors of $x$) to $y$. We take a single hop from the $x'$ to the $y'$ alternating paths if necessary, using Subproperties \ref{prop_R_star_inE} and \ref{prop_R_star_inEprime} from Property~\ref{prop_R_star}.

Figure~\ref{fig:diameter_theorem} illustrates Theorem~\ref{thm_diam-3}, showing the diameter-$3$ problem of interest for this paper. 
The theorem, its proof, and the figure give insight into routing strategies discussed in Section~\ref{sec:minpath}. 

\begin{theorem}\label{thm_diam-3}
    Let $G$ and $G'$ be graphs that satisfy Property~\ref{prop_R} and \ref{prop_R_star}, respectively. Define $f_{(x,y)}(x') = f(x')$ (from Property~\ref{prop_R_star}) for every edge $(x,y)$ of $G$. If diameter of $G$ is $D$, the star product $G_* = G*G'$ is a graph with diameter at most $D+1$.
\end{theorem}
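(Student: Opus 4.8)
The plan is to fix the source vertex $(x,x')$ and the target supernode $y$, and to show that \emph{every} vertex $(y,y')$ with $y'\in V(G')$ is reachable from $(x,x')$ by a walk of length at most $D+1$ in $G_*$. Throughout I take all edge bijections in the star product to be the single involution $f$ supplied by Property~\ref{prop_R_star} (this is consistent, since $f=f^{-1}$). The one observation that drives everything is that crossing an inter-supernode edge $(x,y)\in E(G)$ sends the $G'$-coordinate $a'$ to $f(a')$; hence traversing a walk of length $\ell$ in $G$ advances the $G'$-coordinate to $f^{\ell}(x')$, which equals $x'$ or $f(x')$ according to the parity of $\ell$. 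The problem thus splits into two tasks: routing between supernodes $x$ and $y$ in $G$, and landing on the prescribed $G'$-coordinate inside supernode $y$. The routing is handled entirely by Property~\ref{prop_R} and Corollary~\ref{corollary:R}, which guarantee walks of length exactly $D$ and exactly $D+1$ between any $x,y$ (including the degenerate case $x=y$, where self-loops may be used per the convention in Property~\ref{prop_R}).

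First I would build three families of candidate walks, each of length at most $D+1$, and record which $G'$-coordinate of supernode $y$ each reaches. (i)~Take a length-$D$ walk from $x$ to $y$ and optionally append one intra-supernode edge at $y$; this reaches exactly $\{f^{D}(x')\}\cup N(f^{D}(x'))$. (ii)~Take a length-$D$ walk from $x$ to $y$, but insert one intra-supernode edge at the penultimate supernode $v_{D-1}$ before crossing the final edge; a move there to any $w'\in N(f^{D-1}(x'))$ lands, after the last crossing, on $f(w')$, so this family reaches $f\!\left(N(f^{D-1}(x'))\right)$. (iii)~Take the length-$(D+1)$ walk from Corollary~\ref{corollary:R} with no intra-supernode edge; this reaches the single coordinate $\{f^{D+1}(x')\}$. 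In each case the inserted intra-supernode edge is legal precisely because its two endpoints are adjacent in $G'$, which is exactly the neighborhood condition encoded in the sets above.

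It then remains to verify that the union of the three reachable coordinate sets is all of $V(G')$. Using $f^{2}=\mathrm{id}$, the exponents collapse by parity: when $D$ is even the union is $\{x'\}\cup N(x')\cup f(N(f(x')))\cup\{f(x')\}$, and when $D$ is odd it is $\{f(x')\}\cup N(f(x'))\cup f(N(x'))\cup\{x'\}$. These are exactly the two displayed forms in Corollary~\ref{corollary:Rstar}, so in either parity the union equals $V(G')$. Consequently every vertex $(y,y')$ lies within $D+1$ hops of $(x,x')$, giving $D_*\le D+1$. I expect the main obstacle to be precisely this final bookkeeping: one must track the parity of the walk length through the involution and match each reachable set to the correct one of the two equivalent formulations of Corollary~\ref{corollary:Rstar}, while confirming that the lone intra-supernode hop in families (i) and (ii) is always a genuine edge of $G'$. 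Treating the case $x=y$ uniformly via the self-loop convention of Property~\ref{prop_R} is a secondary point worth stating explicitly.
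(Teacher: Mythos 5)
Your proposal is correct and is essentially the paper's own argument: your three walk families (length-$D$ walk with an optional terminal hop inside supernode $y$, a hop at the penultimate supernode followed by the last crossing, and the length-$(D+1)$ walk from Corollary~\ref{corollary:R}) are exactly the paths the paper builds in its Cases~(1) and~(2), with coverage of $V(G')$ supplied by Corollary~\ref{corollary:Rstar} in both treatments. The only cosmetic difference is that the paper applies Corollary~\ref{corollary:Rstar} at the penultimate coordinate $z'=f^{D-1}(x')$ and argues by cases on $y'$, which sidesteps the parity bookkeeping you do by tracking exponents of $f$ back at $x'$.
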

\begin{proof}
    Let $(x,x')$ and $(y,y')$ be nodes in $G*G'$. By 
Property~\ref{prop_R} 
of $G$, 
 there are length-$D$ $x'-$ and $y'-$alternating paths joining supernodes $x$ and $y$, and also joining supernodes $a$ and $y$
in $G*G'$, where $a$ is a neighbor of $x$ in $G$. 

We consider the exhaustive set of cases relating $x'$ and $y'$ as described in Property~\ref{prop_R_star}. For simplicity of presentation, we assume that $D$ is even. The same proof holds for $D$ odd, reversing the reasoning for cases~\ref{th:diam-3_a} and \ref{th:diam-3_b}. 
\begin{enumerate}[label=(\alph*),itemsep=0pt,parsep=2pt]
    \item If $y'= x'$, we stay in the length-$D$ $x'$-alternating path. \label{th:diam-3_a}
    \item If $y'= f(x')$, we take $1$ hop to $(a,f(x'))$ ($a$ a neighbor of $x$ in $G$), then stay in the length-$D$ $x'$-alternating path from $(a,f(x'))$ to $(y, f(x'))$, giving a length $D+1$ path.\label{th:diam-3_b} 
\end{enumerate}
In the remaining cases, we start with a $D$-hop $x'$- alternating path and take $1$ hop in some supernode to a parallel $D$-hop $y'$- alternating path, giving a length $D+1$ path. Supernode choice depends on which of  $(x',y')$ or $(f(x'),f(y'))$ is in $E(G')$.
\begin{enumerate}[resume, label=(\alph*),itemsep=0pt,parsep=2pt]
    \item If $(x',y')\in E(G')$, we may take $1$ hop from $(b,x')$ to $(b,y')$ at any node $(b,x')$ in the $x'$-alternating path. We reach such $(b,x')$ on even hops.
    \label{th:diam-3_c}
    \item If $(f(x'),f(y'))\in E(G')$, we may take $1$ hop from $(b,f(x'))$ to $(b,f(y'))$ at any node $(b,f(x'))$ in the $x'$-alternating path. We reach such $(b,f(x'))$  on odd hops.
    \label{th:diam-3_d}
\end{enumerate}
\end{proof}
Each of the cases in the proof of Theorem~\ref{thm_diam-3} are illustrated in Figure~\ref{fig:diameter_theorem}, for $D=2$. Figure~\ref{fig:diameter_p2} illustrates  case~\ref{th:diam-3_c}, where the intra-supernode hop is taken $2$ hops into the path at supernode $y$. We might also have taken this hop at supernode $x$ which is $0$ hops into the path. Likewise, Figure~\ref{fig:diameter_p3} illustrates  case~\ref{th:diam-3_d}, where the intra-supernode hop is taken $1$ hop into the path at supernode $a_1$. 

The following theorem from \cite{bermond82} was stated for Property $P_i$. We restate the theorem for our special case where $i=1$, to get diameter $D+1$. The proof is the same as that found in \cite{bermond82}.
\begin{theorem}\label{thm_R1}\cite{bermond82}
Let $G$ be a graph of diameter $D \ge 2$, and let $G'$ have Property~\ref{prop_R_1}. Define $f_{(x,y)}(x') = f(x')$ (from Property~\ref{prop_R_1}) for every arc $(x,y)$ of an arbitrary orientation of the edges of $G$. Then $G*G'$ has diameter at most $D+1$.
\end{theorem}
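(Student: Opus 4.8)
The plan is to reuse the routing idea behind Theorem~\ref{thm_diam-3}, but to replace the appeal to Corollary~\ref{corollary:Rstar} with the completeness condition of Property~\ref{prop_R_1}. Fix arbitrary vertices $(x,x')$ and $(y,y')$ of $G_*=G*G'$; the goal is to exhibit a walk of length at most $D+1$ between them. First I would choose a walk $x=v_0,v_1,\dots,v_\ell=y$ in $G$ with $1\le \ell\le D$: a shortest path when $x\neq y$, and a length-$2$ closed walk $x,z,x$ through an arbitrary neighbor $z$ when $x=y$ (legitimate since $D\ge 2$, so this walk still has length $\le D$). Then I would travel from $(x,x')$ through the supernodes along the edges of type~(2) dictated by $v_0,\dots,v_{\ell-1}$, arriving after $\ell-1$ hops at a vertex $(w,w')$ with $w=v_{\ell-1}$. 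The precise value of $w'$ is irrelevant to the argument, since Property~\ref{prop_R_1} is a statement about \emph{all} vertex pairs of $G'$.

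The crux is then to connect $(w,w')$ to $(y,y')$ in at most two further hops. Let the fixed orientation of the edge $\{w,y\}$ determine the bijection $g\in\{f,f^{-1}\}$ for which $(w,a')$ is adjacent to $(y,g(a'))$ for every $a'$. There are two candidate length-$2$ detours: (i)~move within supernode $w$ from $w'$ to $g^{-1}(y')$ along an edge of type~(1) and then hop across to $(y,y')$, which is available whenever $(w',g^{-1}(y'))\in E(G')$; or (ii)~hop across to $(y,g(w'))$ and then move within supernode $y$ to $(y,y')$, available whenever $(g(w'),y')\in E(G')$. In either case (and also when endpoints coincide, merely shortening the walk) the total length is $\ell-1+2=\ell+1\le D+1$.

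The main obstacle, and the only place Property~\ref{prop_R_1} enters, is showing that at least one of detours (i),~(ii) is always available. Here I must be careful that the $f(E(G'))$ branch of the completeness condition introduces an extra copy of $f^{-1}$ that has to cancel against $g$, which is exactly why the two orientations of the last edge are treated separately. If $g=f$, I would test the pair $(f(w'),y')$: by Property~\ref{prop_R_1} it lies in $E(G')\cup f(E(G'))$, so either $(f(w'),y')\in E(G')$ and detour~(ii) applies, or $(f(w'),y')\in f(E(G'))$, which forces $(w',f^{-1}(y'))\in E(G')$ and makes detour~(i) applicable. If instead $g=f^{-1}$, I would test $(w',f(y'))$: either it lies in $E(G')$ and detour~(i) applies, or it lies in $f(E(G'))$, forcing $(f^{-1}(w'),y')\in E(G')$ so that detour~(ii) applies. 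Both orientations thus yield a walk of length $\le D+1$, completing the argument.

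Finally, I would note that for this $i=1$ bound the argument uses only that $f$ is a bijection and that $E(G')\cup f(E(G'))$ is the complete graph on $V(G')$; the requirement in Property~\ref{prop_R_1} that $f^2$ be an automorphism of $G'$ is not actually invoked here (it is relevant to the higher-order properties $P_i$ of \cite{bermond82}). The genuinely delicate point is the $f$/$f^{-1}$ bookkeeping across a \emph{fixed} global orientation, since one cannot reorient the final edge per vertex pair; isolating the last supernode hop and reading the completeness condition in the correct direction is what makes the proof go through.
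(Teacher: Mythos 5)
Your proof is correct and is essentially the argument the paper relies on: the paper gives no standalone proof of Theorem~\ref{thm_R1} (it explicitly defers to \cite{bermond82}), and your routing scheme---take $\ell-1$ supernode crossings along a path of $G$ (a length-$2$ closed walk when $x=y$, the one place $D\ge 2$ is needed), then repair the $G'$-coordinate in at most two final hops by reading the completeness of $E(G')\cup f(E(G'))$ against the fixed orientation of the last arc---is the same template as the paper's proof of Theorem~\ref{thm_diam-3} and as the cited result of Bermond, Delorme, and Farhi. Your $f$/$f^{-1}$ bookkeeping is handled correctly in both orientation cases (including the degenerate coincidences, which only shorten the walk), and your side remark that the $f^2$-automorphism clause of Property~\ref{prop_R_1} is never invoked for this $i=1$ bound is also accurate.
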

\section{Choosing Good Factor Graphs}\label{sec:structure_supernode}
We want to choose the largest feasible structure graph and supernode, since the order of the star product is the product of the orders of the factor graphs. We also want factor graphs with properties useful for networking  (including the $R$ properties, among others).
\begin{itemize}[itemsep=0pt,parsep=2pt,leftmargin=*]
    \item For the structure graph, we use the \emph{Erd\H os-R\'enyi} ($ER_q$) family of polarity graphs, introduced by Erd\H os and R\'enyi  \cite{erdosrenyi1962} and 
    by Brown \cite{brown_1966}. This family of graphs has Property~\ref{prop_R} and is larger than any other known family of diameter-$2$ graphs, 
    so is an excellent structure graph candidate. $ER_q$ has many other networking advantages that the star-product inherits~\cite{polarfly_sc22, lakhotia2023network, dawkins2024edge}.
    \item For the supernode, we construct a new graph having Property~\ref{prop_R_star} called \emph{\newSuperN} (IQ). IQ is larger than 
    all other supernodes with properties known to produce diameter-3 star product with $ER_q$, and
    thus, is a better $G'$ candidate than any existing construction. IQ attains the bound on R$^*$ graphs; thus, no supernode with property R$^*$, P$^*$ or P$_1$, can be larger than $IQ$. 
\end{itemize}
\subsection{Choice of the Structure Graph \texorpdfstring{$G$}{} }\label{sec:ER}
\subsubsection[ERq and its Networking Properties]{$ER_q$ and its Networking Properties}
The Erd\H os-R\'enyi (ER) or Brown family of polarity graphs \cite{erdosrenyi1962,brown_1966} is based on finite projective geometry, where adjacency is defined by orthogonality. These graphs were used for the \fly\cite{polarfly_sc22}, SymSig~\cite{brahme2013symsig} and demi-PN~\cite{camarero2016projective} networks, due to their many advantages~\cite{polarfly_sc22,lakhotia2023network}.

These desirable networking properties carry over to the star product. Thus, the family of ER graphs is an excellent candidate for the structure graph, since these graphs do have Property~\ref{prop_R}, as per Theorem~\ref{thm:ER_PropR}. Advantages include:
\begin{itemize}[itemsep=0pt,parsep=2pt,leftmargin=*]
    \item $ER$ graphs are larger than other known diameter-$2$ graphs at almost all degrees, as shown in Figure~\ref{fig:er_mb}. The order of $ER$ graphs for degree $q+1$ is $q^2+q+1$, so they asymptotically reach the diameter-$2$ Moore bound ($q^2+2q+2$).
    \item $ER$ has a large number of feasible degrees: an $ER$ graph exists for any degree $q+1$, where $q$ is a prime power. 
    \item $ER$ has a modular layout.
    \item $ER$ has high bisection bandwidth.
\end{itemize}
\begin{figure}[!htbp]
    \centering
    \includegraphics[width=.8\linewidth]{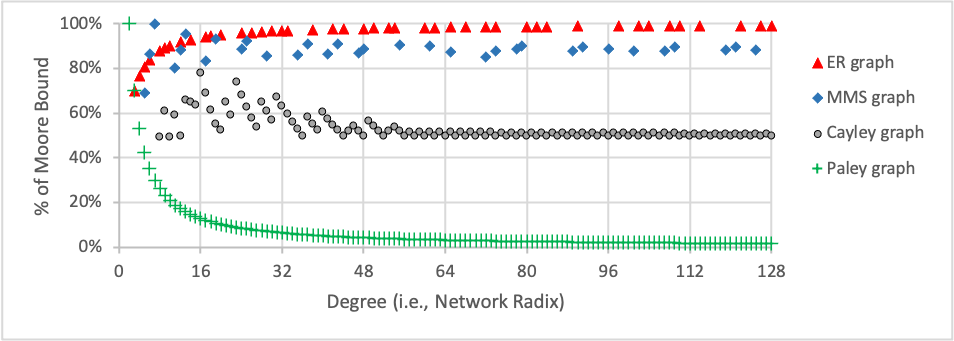}
    \caption{Moore-bound comparison for some known families of diameter-$2$ graphs: the $ER$ graph, the McKay-Miller-\v{S}ir{\'a}\v{n} graphs~\cite{mckay98}, the best Cayley graphs~\cite{abas_2017}, and the Paley graph. 
    It can be seen that any larger structure graph would only marginally increase the size of the star product.
    }
    \label{fig:er_mb}
    \Description{Moore-bound comparison}
\end{figure}
\subsubsection[Construction of ERq]{Construction of $ER_q$}
The vertices of $ER_q$ are vectors $(x,y,z)$, with $x,y,z \in \mathbb{F}_q$, the finite field of order $q$. 
Vertices $v$ and $w$ are adjacent if their dot product $v\cdot w=0$, with addition and multiplication as in $\mathbb{F}_q$. 
Since adjacency is defined
by orthogonality of two vectors, all multiples of any two vectors retain the
same adjacency relationship. Thus, we move to projective space,
considering only the left-normalized form  of each vector (so the leftmost non-zero entry of each vector is $1$).
The ER graph has these left-normalized vectors as the vertices and edges between all orthogonal vector pairs.
Note that arithmetic over $\mathbb{F}_q$ is used to compute orthogonality. See \cite{mceliece_1987}
for details of arithmetic over $\mathbb{F}_q$ and \cite{polarfly_sc22} for ER graph \looseness=-1construction.

$ER_q$ is a diameter-$2$ graph. This may intuitively be seen by considering perpendicularity in Euclidean space. 
Each pair of distinct vectors $v_0$ and $v_1$ is orthogonal to a common ${w=v_0 \times v_1}$, the cross product of $v_0$ and $v_1$. The $2$-hop path from $v_0$ to $v_1$ is then given by ($v_0,w,v_1$). The intuition is similar in the case of finite geometry.

Certain vertices in $ER_q$ are self-orthogonal, since we use the arithmetic of $\mathbb{F}_q$. If we allow the self-loops as factor-graph edges, the condition of Property~\ref{prop_R} then holds for all vertices.

\begin{theorem}\label{thm:ER_PropR}
    ER$_q$ has Property~\ref{prop_R} for all prime powers $q$.
\end{theorem}
\begin{proof}
    Any pair of vertices $(x,y)$ in an ER graph are connected by a $2-$hop path via the cross product vertex $w=x \times y$. For some pairs $\{x,y\}$, $w$ may be the same as $x$ or $y$, in which case the $2-$hop path includes a self-loop.    
\end{proof}

Factor graph self-loops then add edges to the star product, as in Figure~\ref{fig:star_paley}, and we drop any remaining self-loops in the product.

\subsection{Choice of the Supernode \texorpdfstring{$G'$}{}}
\begin{figure}[!htbp]
    \centering
    \begin{subfigure}{.5\linewidth}
      \centering
      \includegraphics[width=.5\linewidth]{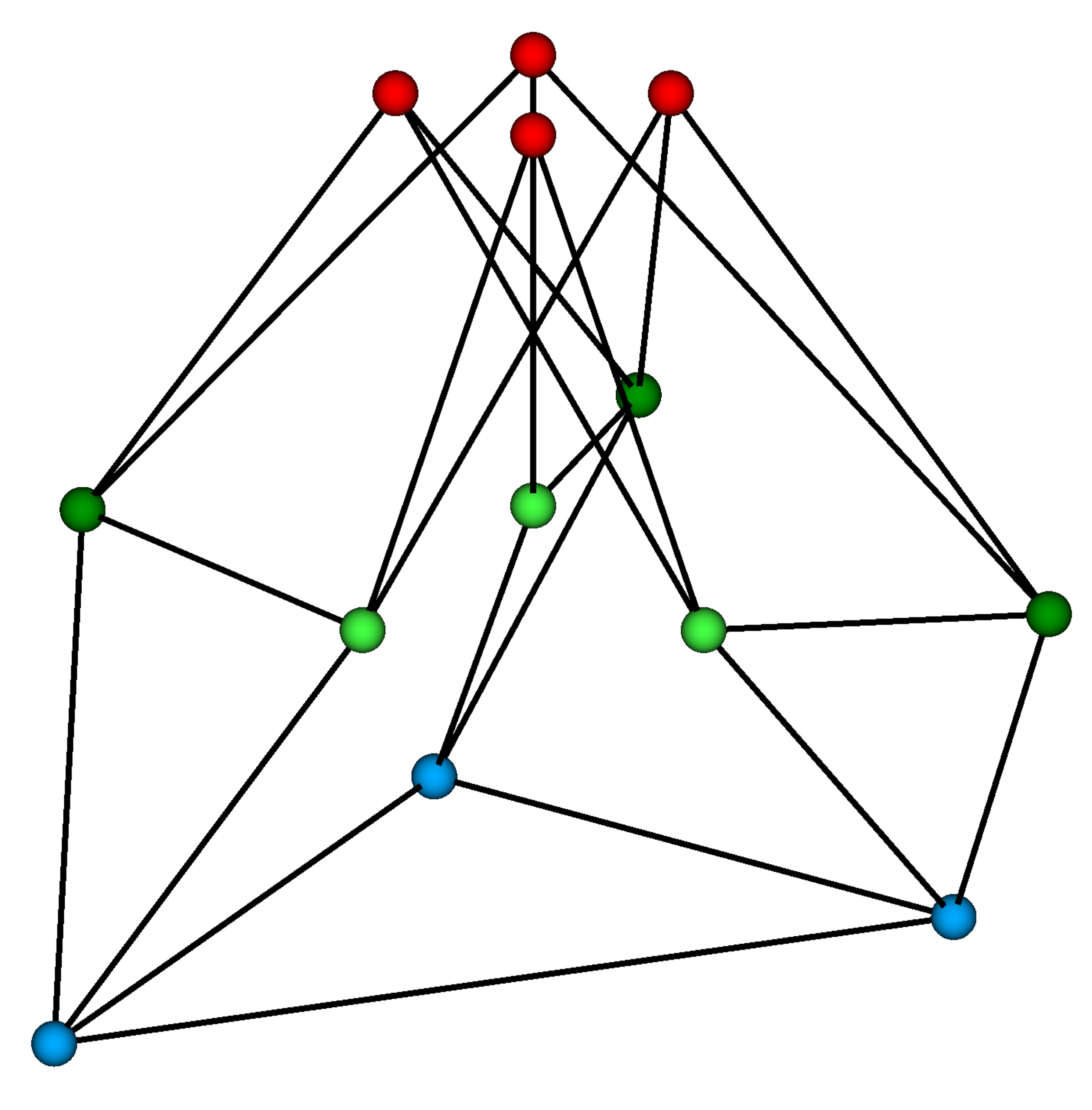}
      \caption{The structure graph $G$: $ER_3$.}
      \label{fig:star_er}
    \end{subfigure}%
    \begin{subfigure}{.4\linewidth}
      \centering
      \includegraphics[width=.7\linewidth]{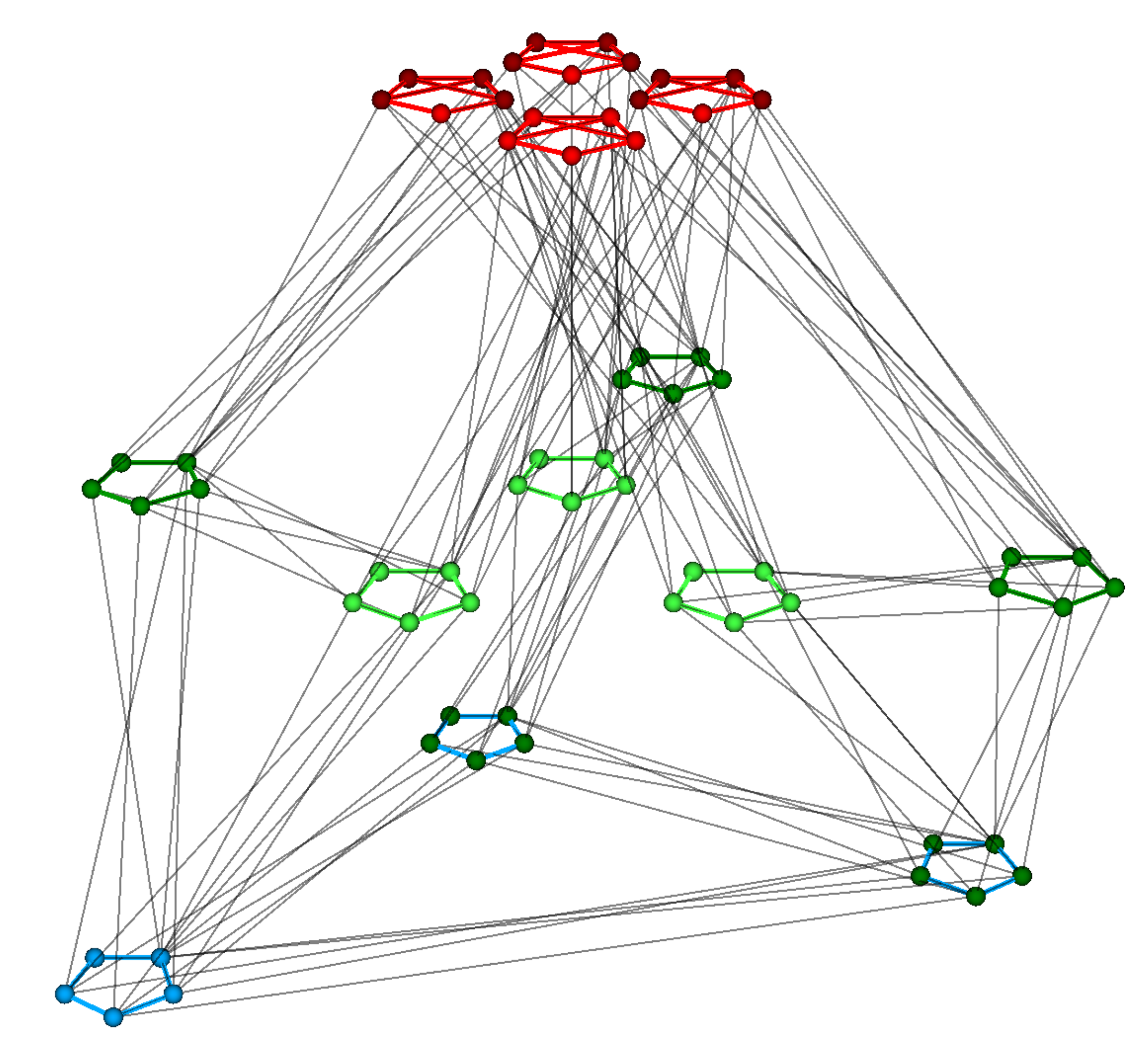}
      \caption{$G*G'$: $ER_3*Paley($5$)$. }
      \label{fig:star_star}
    \end{subfigure}
    \begin{subfigure}{.75\linewidth}
      \centering
      \includegraphics[width=.7\linewidth]{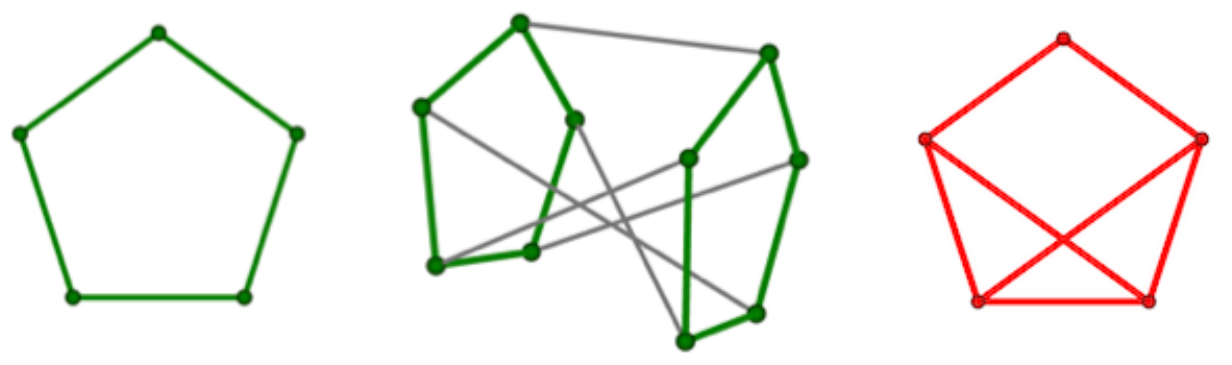}
      
      \caption{The supernode $G'$: Paley($5$). We show a non-self-loop supernode (green) and the connections between two non-loop supernodes (green). $ER_3$ happens to have vertices with self-loops. In that case, the star product adds edges to those supernodes, as shown in red.
      }
      \label{fig:star_paley}
    \end{subfigure}
\caption{Construction of the star product $ER_3*Paley($5$)$.
}
\label{fig:star_ps}
\Description{Construction of our star product}
\end{figure}

For the supernode $G'$, 
we construct a new family of graphs, called \newSuper ($IQ$). This family, where it exists, has Property $R^*$ and meets the upper bound on the order of a supernode.
We also mention the Paley graph \cite{Paley1933OnOM, Erdos1963AsymmetricG} as a $G'$ candidate. It is only slightly smaller, and has symmetricity, useful in network design.
The Inductive-Quad and Paley graphs give the largest star-products for almost all radixes: the $IQ$ supernode almost always produces the largest star product, and Paley produces the largest of the others. We show the star-product construction of PolarStar in Figure~\ref{fig:star_ps}.

\begin{table}[ht]
\setlength{\tabcolsep}{2.5pt}
\centering
\footnotesize
\resizebox{\linewidth}{!}{
\begin{tabular}{lllccc@{}}
\toprule
\textbf{Supernodes} & 
\makecell[l]{\textbf{Order}} & 
\makecell[l]{\textbf{Permitted $d'$}} & 
\makecell[l]{\textbf{Symmetric}} &
\makecell[l]{{$R^*$}} & 
\makecell[l]{{$R_1$}} 
\\
\midrule
\newSuper & $2d'+2$ & $0$ or $3$ (mod $4$) & N & Y & N \\
Paley~\cite{Paley1933OnOM, Erdos1963AsymmetricG} & $2d'+1$ & {even, 2$d'+1$ a prime power} & Y & N & Y \\ 
BDF~\cite{bermond82} & $2d'$ & all & N & Y & N \\ 
Cayley \cite{mckay98} & $2d'+\delta, \delta \in \{0,\pm 1\}$ &  $2d'+\delta$ a prime power & Y & N & Y \\ 
Complete & $d'+1$ & all & Y & Y & Y \\
\bottomrule
\end{tabular}
}
\caption{Comparison of parameters of degree $d'$ supernodes.}
\label{table:supernode_chart}
\end{table} 

Other supernode topologies may be of interest, so we mention these here for completeness. For instance, 
complete graphs provide densely-connected
regions of locality, and Cayley graphs are highly symmetric~\cite{mckay98}. The BDF graphs are designed in~\cite{bermond82} specifically for large star products.  We list supernode choices in 
Table~\ref{table:supernode_chart}, and show the orders of $IQ$ and Paley for relevant radixes in Figure~\ref{fig:design_space}.
\subsubsection{Construction of \newSuper Graphs} \label{sec:Quad}
We show here a new construction of a degree-$d'$ \ref{prop_R_star} graph that \emph{reaches the upper bound} $2d' + 2$ on order of \ref{prop_R_star} graphs. 

We begin with two base \ref{prop_R_star} Inductive-Quad graphs shown in Figure~\ref{fig:inductive_base}: $IQ_0$ of degree $d'=0$ with $2$ vertices, and $IQ_3$ of degree $d'=3$ with $8$ vertices. Both have $2d'+2$ vertices.  

We then inductively construct a graph of degree $d'+4$ with Property~\ref{prop_R_star} from a graph $IQ_{d'}$ of degree $d'$ with Property~\ref{prop_R_star} that has $2d'+2$ vertices. As shown in Figure~\ref{fig:inductive_increase}, $V(IQ_{d'})$ may be 
partitioned into two disjoint sets of $d+1$ vertices each: $A$ and $f(A)$. 
To construct $IQ_{d'+4}$,
we add the $8$ vertices of $IQ_3$:
$$\{x',f(x'),y',f(y'),z',f(z'),w',f(w')\}$$ 
along with the $IQ_3$ edges. Next, we add the following edges:
\begin{itemize}[itemsep=0pt,parsep=2pt]
    \item between $\{x',f(x'),z',f(z')\}$ and all vertices in $A$, and
    \item between $\{y',f(y'),w',f(w')\}$ and all vertices in $f(A)$.
\end{itemize}
\begin{figure}[htbp]
    \centering
    \begin{subfigure}{0.9\linewidth}
      \centering
      \includegraphics[width=.7\linewidth]{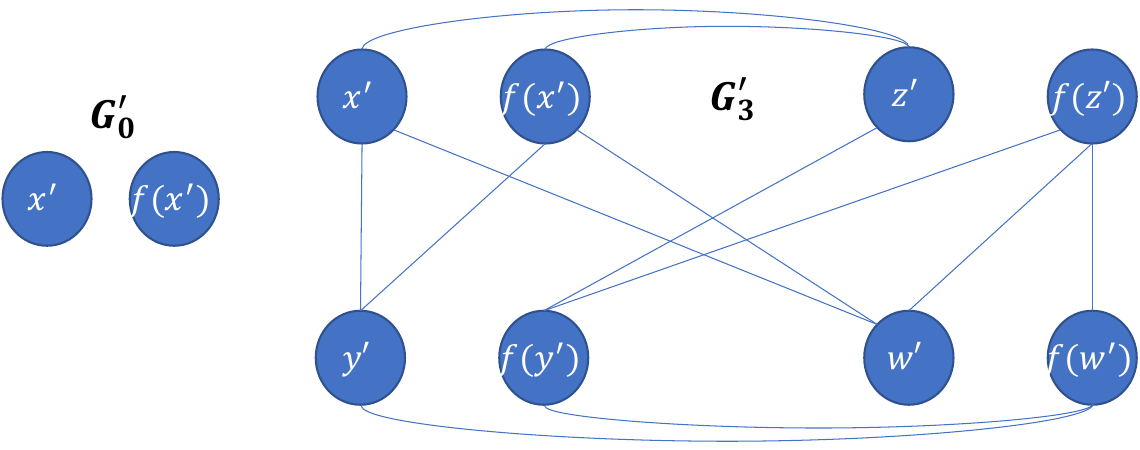}
      \caption{Base \newSuper graphs of degree $d'=0$ and $d'=3$.}
      \label{fig:inductive_base}
    \end{subfigure}
    \begin{subfigure}{\linewidth}
      \centering
      \includegraphics[width=.7\linewidth]{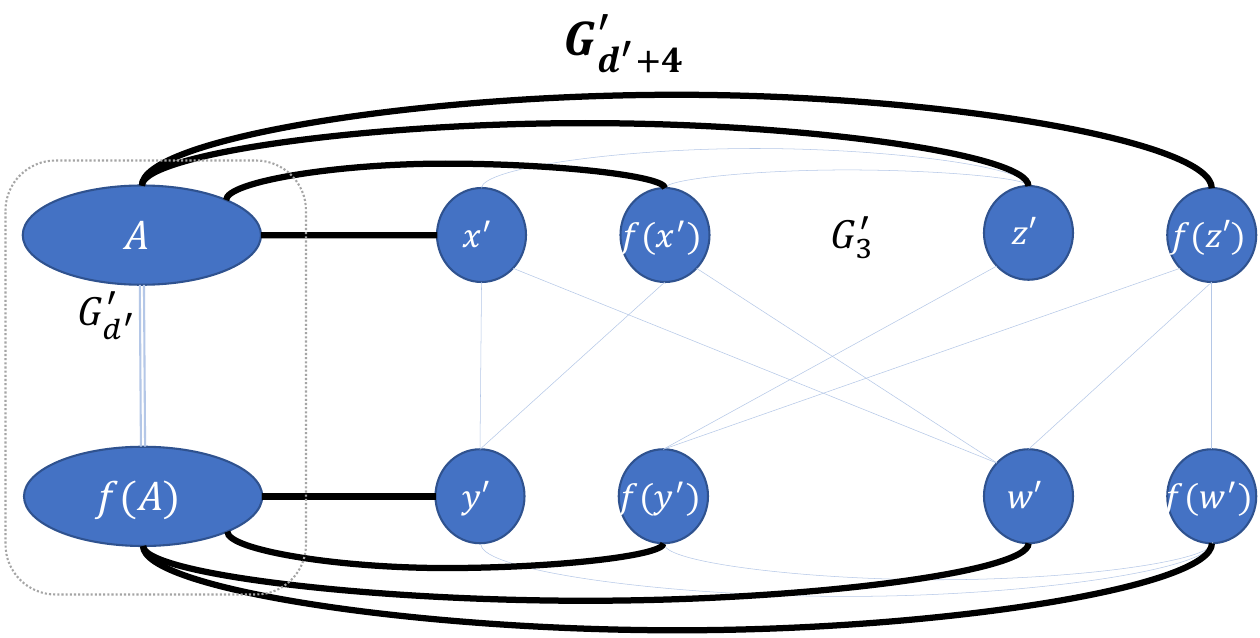}
      \caption{The construction of the \newSuper graph of degree $d'+4$ from an Inductive-Quad graph of degree $d'$ and $IQ_{3}$.}
      \label{fig:inductive_increase}
    \end{subfigure}
\caption{Inductive construction of \newSuper topology with embedded bijection $f$ that satisfies Property~\ref{prop_R_star}.}\label{fig:inductive_quad}
\Description{}
\end{figure}
\begin{proposition}\label{prop:IQ_propRstar}
    $IQ_{d'}$ has $2d'+2$ vertices and degree $d'=4n$ or $d'=4n+3$, and has Property~\ref{prop_R_star}.
\end{proposition}
\begin{proof}
    The order and degree follow from the construction. Property~\ref{prop_R_star} follows by induction, noting by inspection that $IQ_0$ and $IQ_3$ both have Property~\ref{prop_R_star}, and noting that the inductive construction then guarantees Property~\ref{prop_R_star}. 
\end{proof}
\begin{corollary}\label{propose:existence}
 For every integer $n\geq 0$, there exists a graph $G'_{d'}$ of degree $d'=4n$ or $d'=4n+3$ that satisfies $R^*$ and has $2d'+2$ vertices.
 \end{corollary}
 \if0
\begin{proof}
    Follows from Proposition~\ref{prop:IQ_propRstar}.
\end{proof}
\fi
Corollary~\ref{propose:existence} implies that the $IQ$ graphs are larger than those discussed in \cite{bermond82}, as those graphs have a smaller $G'$ supernode.

It is also true that graphs $G'_{d'}$ satisfying $R^*$ and having $2d'+2$ vertices exist only for $d'=4n$ or $d'=4n+3$ for some non-negative integer $n$, but we do not prove this here.
\if0
\begin{proposition}\label{propose:non_existence}
A graph $G'_{d'}$ of degree $d'$ that satisfies $R^*$ and has $2d'+2$ vertices, can only exist if $d'=4n$ or $d'=4n+3$ for some non-negative integer $n$.
\end{proposition}
\fi

\section{Design Space of PolarStar}
We evaluate the scale of network achievable by \newfly and compare it against existing diameter-3 topologies.

\subsection{Theoretical Analysis}
Recall the degree of
star product $G*G'$ is $\deg(G)+\deg(G')$, and the order is $\abs{V(G)}\cdot\abs{V(G')}$. Our structure graph is $ER_q$, which has degree $d=q+1$ and order $q^2+q+1$, where $q$ is a prime power. Using a
\newSuper supernode of degree
$d'$, we get a \newfly $G_*$ of
degree $d_*=d+d'$ and order
$$
\abs{V(G_*)}= (q^2+q+1)(2d'+2) = (q^2+q+1)(2d_* - 2q).
$$
The order is maximized for
\if0
\begin{align}
\arg\max_q{V(G_*)} &= \frac{(d_*-1) + \sqrt{(d_*-1)(d_*-2)}}{3} \nonumber\\ &
\approx \frac{d_*-1+d_*+\frac{1}{2}}{3}
\approx \frac{2d_*}{3}\label{eq:argmax_q}
\end{align}
\fi
\begin{equation}
    \arg\max_q{V(G_*)} = \frac{(d_*-1) + \sqrt{(d_*-1)(d_*-2)}}{3} \approx \frac{2d_*}{3}.\label{eq:argmax_q}
\end{equation}
Substituting this value of $q$, we get that the maximum order of \newfly for a given degree $d_*$ is
\begin{equation}\label{eq:polarstar_order}
\max_{\text{\newSuper}}{\abs{V(G_*)}} \approx \frac{8d_*^3 + 12d_*^2 + 18d_*}{27}.
\end{equation}
Similarly,
$\max_{\text{Paley}}{\abs{V(G_*)}}$ $\approx \frac{8d_*^3}{27}$.
Thus, \newfly  asymptotically reaches $\frac{8}{27}^{\text{th}}$ Moore bound for diameter-3 graphs.

In practice, the degree distribution among factor graphs is constrained as (a)~$q$ must be a prime power in an ER graph of degree $q+1$, and (b)~\newSuper and Paley graphs exist for a subset of integer radixes, as shown in Table~\ref{table:supernode_chart}. Therefore, we evaluate all feasible combinations of $d$ and $d'=d_*-d$ for both \newSuper and Paley supernodes to find the largest configuration.
\subsection{Scalability in Practice}
Figure~\ref{fig:MB_comparison} compares the scalability
of \newfly and other direct diameter-3 topologies, in terms
of their Moore-bound efficiency.
Clearly, \newfly \emph{exceeds the scalability} of
all known diameter-3 topologies. Compared to HyperX~\cite{ahn2009hyperx} and Dragonfly~\cite{dally08}, it 
achieves $6.7\times$ and $1.9\times$ 
geometric mean increase in the order, respectively.
\begin{figure}[htbp]
    \centering
    \includegraphics[width=0.95\linewidth]{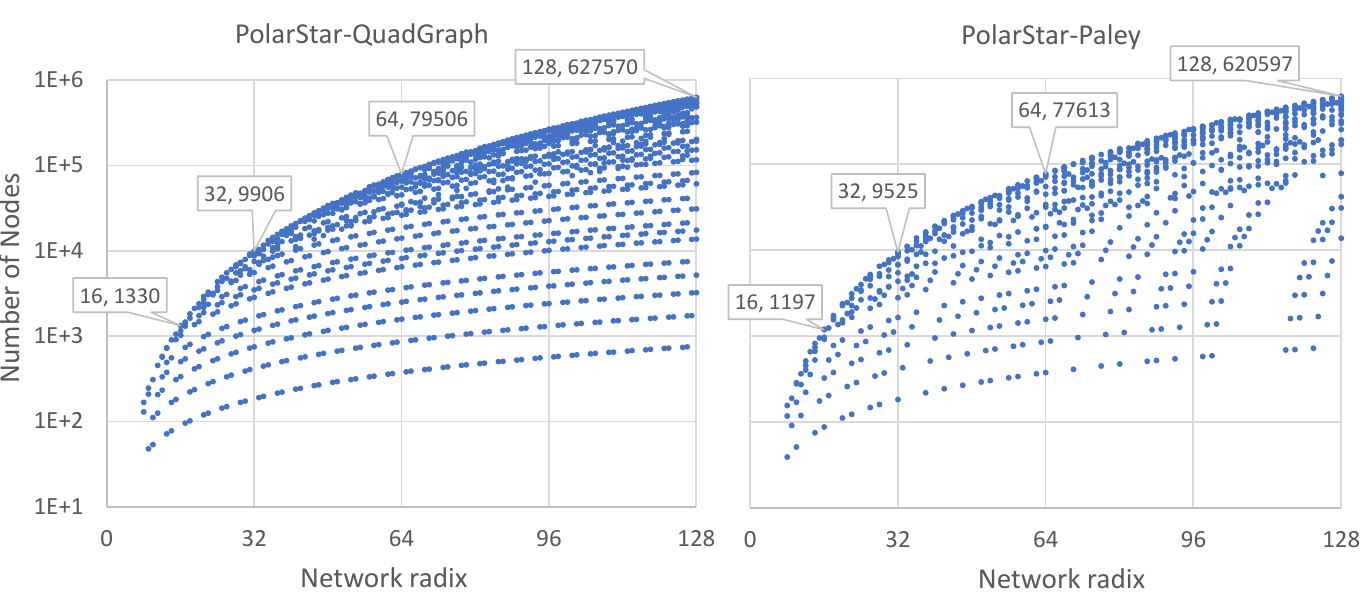}
    \caption{Feasible combinations of radix and order in \newflyN.}
    \label{fig:design_space}
    \Description{}
\end{figure}

Unlike the state-of-the-art
Bundlefly~\cite{lei2020bundlefly}, \newfly offers a 
construction for every network radix and 
more consistent scaling with respect to the Moore bound. Overall, it is $1.3\times$ geometric mean larger than Bundlefly, which results from
the use of more scalable structure graphs and supernodes in its star product, with more possibilities of degree distributions for better 
optimization of \looseness=-1scale.

\newfly 
also approaches the optimal scale for a star-product network with the currently known factor graph properties that give a diameter-3 product. This is because
(a)~the Erd\H os-R\'enyi structure graph asymptotically reaches the
diameter-2 Moore bound, and
(b)~the \newSuper supernode topology reaches the
optimal order for graphs satisfying Property~\ref{prop_R_star}. 

For all radixes, the largest \newfly order for 
degrees $k\in[8,128]$ is 
constructed with the \newSuper supernode, except $k={23, 50, 56, 80}$, where Paley supernode gives a larger topology.  
\newfly also offers a wide range of network orders for each radix, as shown in 
Figure~\ref{fig:design_space}. This diversity of feasible
designs is enabled by varying (a)~the degree distribution between 
structure and supernode graphs, and (b)~the choice of
supernode \looseness=-1graph.

\section{Layout}
\begin{figure*}[htbp]
\centering
\begin{subfigure}[t]{0.31\textwidth}
\centering
\includegraphics[width=0.6\columnwidth]{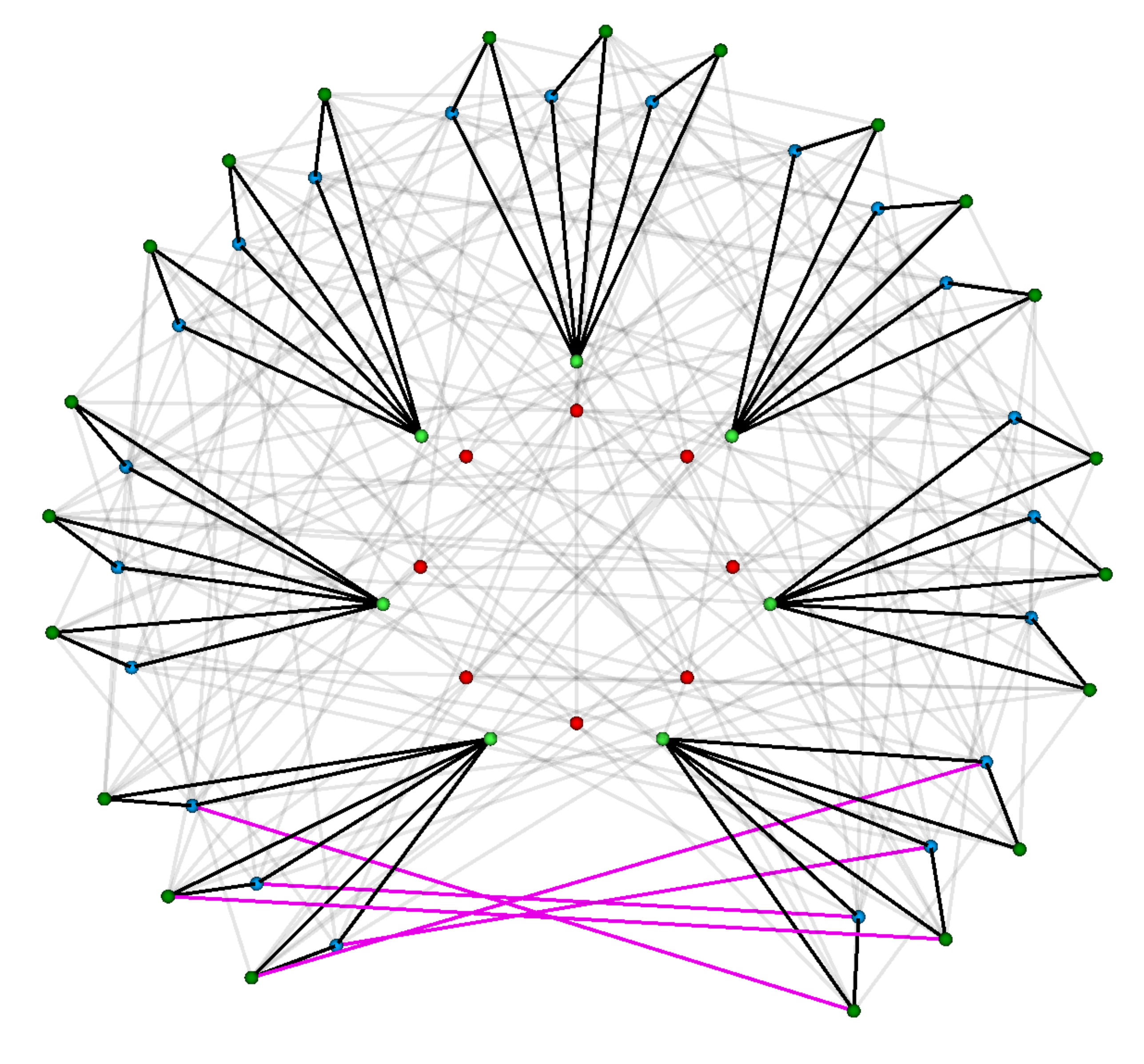}
\caption{Modular layout for $ER_7$ graph~\cite{polarfly_sc22}. Each group of $3$ triangles with a common node is a non-quadric cluster. Red nodes form a quadric cluster. Magenta edges connect two clusters.}
\label{fig:er7_layout}
\end{subfigure}\hspace{3mm}
~
\begin{subfigure}[t]{0.31\textwidth}
\centering
\includegraphics[width=0.6\columnwidth]{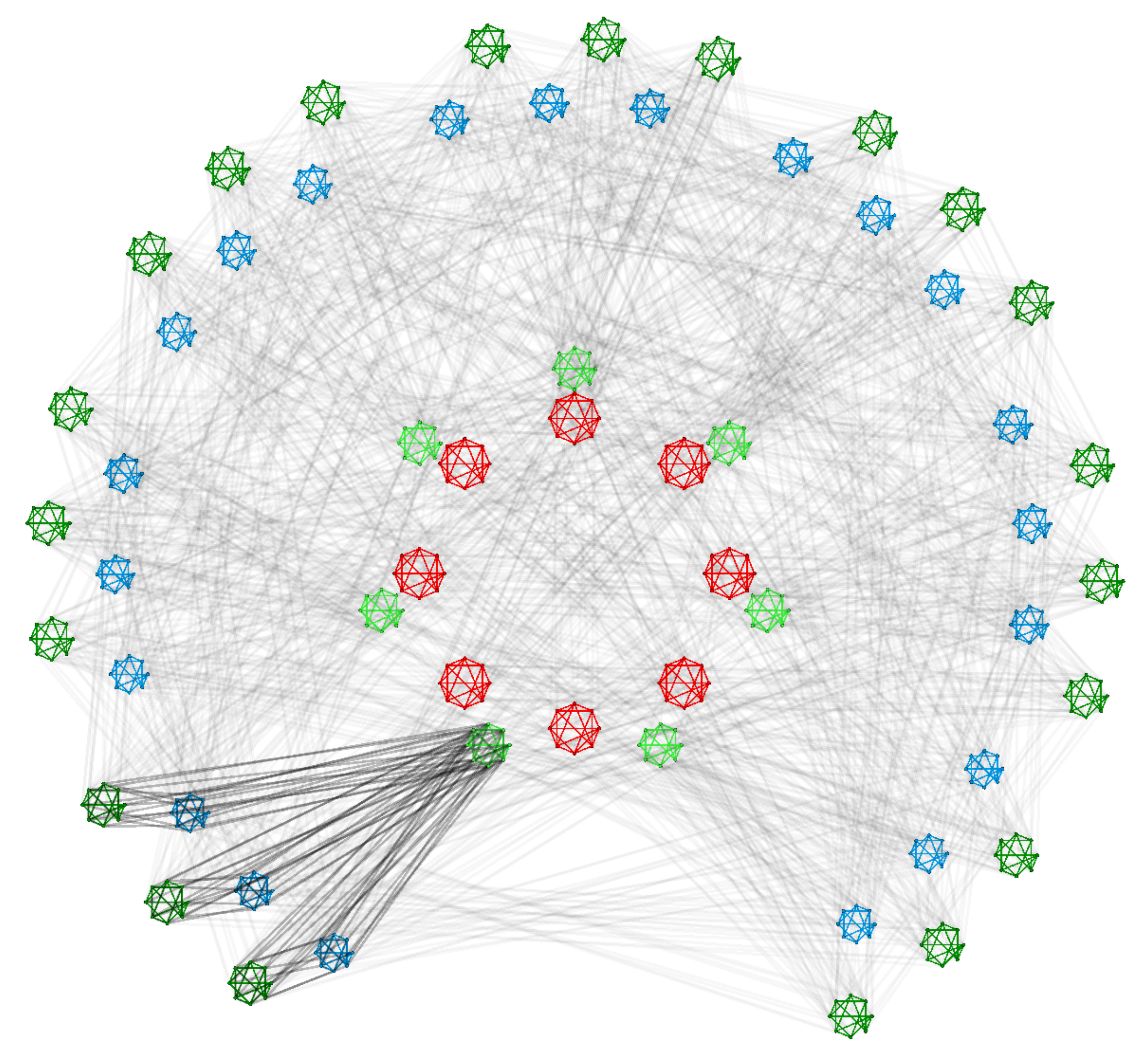}
\caption{Layout of PolarStar$_{11}$ with an $ER_7$ structure graph. Each $ER_7$ node becomes an $IQ_3$ supernode in \newflyN. The highlighted links and incident supernodes are a supernode cluster.}
\label{fig:polarstar_inter_supernode}
\end{subfigure}\hspace{3mm}
~
\begin{subfigure}[t]{0.31\textwidth}
\centering
\includegraphics[width=0.6\columnwidth]{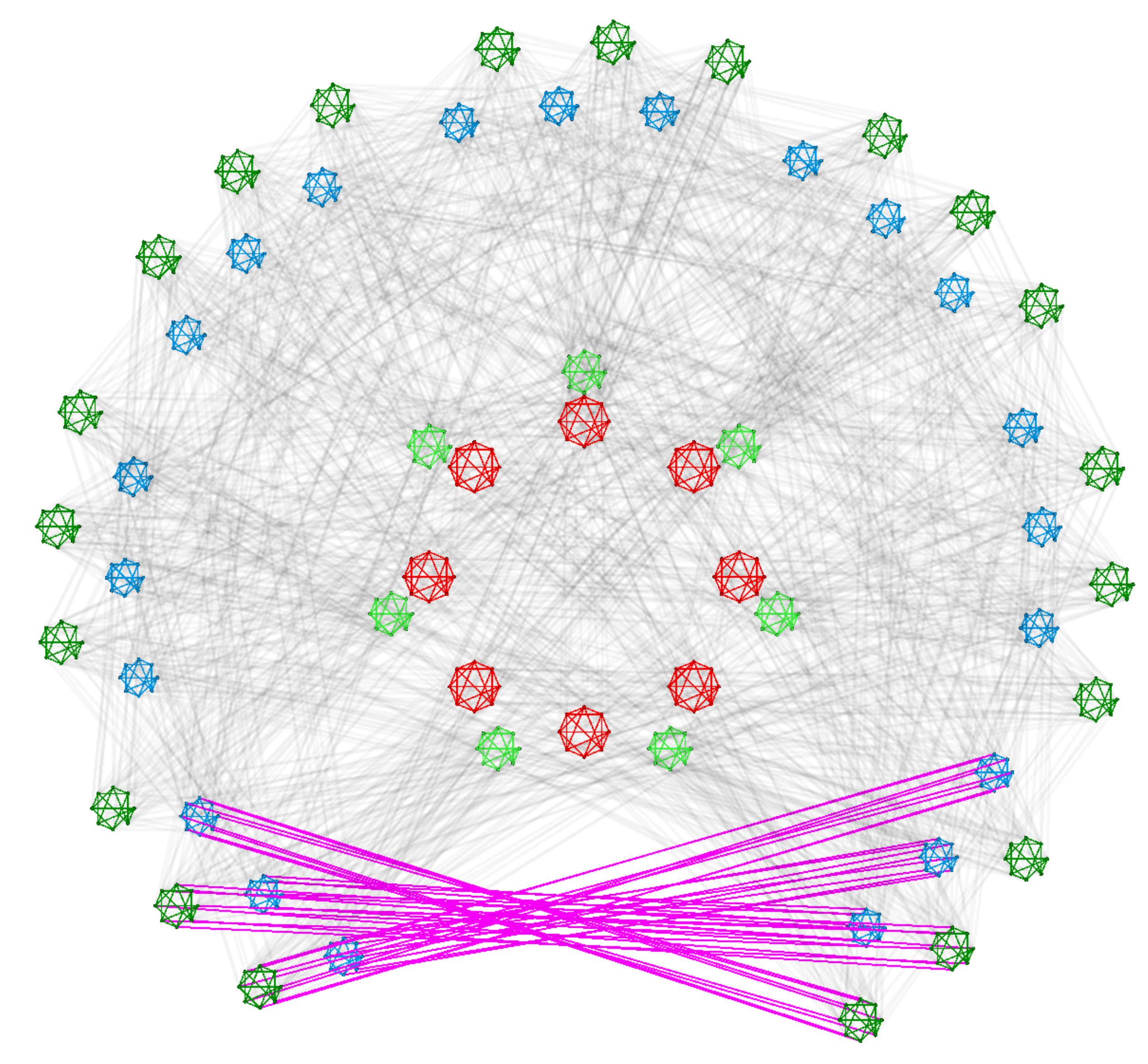}
\caption{Each pair of supernode clusters in \newfly are connected by multiple link bundles in magenta. Each link bundle corresponds to a single inter-cluster link of $ER_7$ shown in Figure~\ref{fig:er7_layout}.}
\label{fig:polarstar_inter_cluster}
\end{subfigure}
\caption{Hierarchical Modular Layout for \newfly derived from a layout for ER structure graphs used in the PolarFly network~\cite{polarfly_sc22}. Adjacent supernodes are
connected by a bundle of links and adjacent supernode clusters are connected by multiple such bundles.}
\label{fig:polarstar_layout}
\Description{}
\end{figure*}

Physical deployment favors modular topologies
comprised of smaller identical subgraphs that
may be implemented as blades or racks.
Also, if multiple links
connect adjacent modules, they can
be bundled into Multicore Fibers~(MCFs) to reduce cable count~\cite{awaji2013optical, lei2020bundlefly}. 
We explain the hierarchically modular structure of \newfly shown in Figure~\ref{fig:polarstar_layout} and point to opportunities for bundling links in \newflyN. 

Consider a maximum order \newfly of degree $d_*$ with $ER_q$ structure graph of degree $q+1\approx \frac{2d_*}{3}$~(Equation~\eqref{eq:argmax_q}), and \newSuper supernode of degree $d'=d_*-(q+1)$.
In this \newflyN, the \emph{supernode} is 
the smallest building block~(Figure~\ref{fig:star_ps}) of $2d_*-2q$ nodes, and is replicated $q^2+q+1$ times~(once per node of $ER_q$). 
There are $2(d_*-q)$ links between each pair of adjacent supernodes. If these can be bundled in an MCF, we get 
$q(q+1)^2$ inter-module MCFs (same as the non-self-loop edges in $ER_q$~\cite{polarfly_sc22}). Therefore, such bundling reduces the global cables by a factor of $\approx\frac{2d_*}{3}$.

The next level in modular hierarchy is the \emph{supernode clusters}. As shown in~\cite{polarfly_sc22}, $ER_q$ graph can be divided into $q+1$ clusters with approximately $q$ links between each pair of clusters.
In \newflyN, this translates to  $q+1$ supernode clusters with approximately $q$ bundles of links between each \looseness=-1pair as shown in Figure~\ref{fig:polarstar_inter_cluster}. 

\section{Evaluation: Synthetic 
Patterns
}\label{sec:eval}
\subsection{Topologies}\label{sec:baselines}
We compare \newfly with  Bundlefly~\cite{lei2020bundlefly}, {Megafly~\cite{flajslik2018megafly, shpiner2017dragonfly+} and Spectralfly~\cite{aksoy2021spectralfly}} as state-of-the-art diameter-3 networks, 
3-D HyperX~\cite{ahn2009hyperx} and Dragonfly~\cite{dally08} as popular diameter-3 networks in practice, and 3-level Fat-trees~\cite{Leiserson:1985:FUN:4492.4495} as the most widely used network, as is standard, using Booksim simulator~\cite{jiang2013detailed}. 

Networks such as torus, hypercube or Flattened Butterfly 
have been shown to have lower performance than these baselines~\cite{besta2014slim, dally08}. {We also explored the Galaxyfly family of flexible low-diameter topologies~\cite{lei2016galaxyfly}. A diameter-3 Galaxyfly is isomorphic to a Dragonfly, which is included in the comparison.}

\begin{table}[ht]
\setlength{\tabcolsep}{2.5pt}
\centering
\footnotesize
\resizebox{\linewidth}{!}{
\begin{tabular}{llcccc@{}}
\toprule
%

\textbf{Network} & \textbf{Parameters} & \textbf{\# Routers} & \textbf{\begin{tabular}[c]{@{}c@{}}Network\\ Radix\end{tabular}} & \textbf{\# Endpoints}\\

\midrule
\newfly with \newSuper(PS-IQ) & $d\text{=12},\ d'\text{=3}$, p=5        & 1,064       & 15 & 5,320 \\
\newfly with Paley~(PS-Pal) & $d\text{=9},\ d'\text{=6}$, p=5        & 993       & 15 & 4,965 \\
Bundlefly~(BF) & $d\text{=11},\ d'\text{=4}$, p=5 & 882 & 15 & 4,410\\
3-D HyperX~(HX) & $9\times 9\times 8$ , p=8 & 648 & 23 & 5,184\\
Dragonfly (DF) & a=12, h=6, p=6     & 876       & 17 & 5,256 \\
Spectralfly~(SF) & $\rho$=23, $q$=13, p=8 & 1,092 & 24 & 8,736 \\
Megafly~(MF) & $\rho$=8, $a$=16, p=8 & 1,040 & 16 & 4,160 \\
3-level Fat-tree (FT) & n=3, p=18        & 972       & 36 & 5,832\\
\bottomrule
\end{tabular}
}
\caption{Simulated configurations of topologies. Endpoints per router $=p$.}
\label{table:config}
\end{table} 



The configurations of topologies used are shown in Table~\ref{table:config}. 
For direct diameter-3 networks, $p$, the number of endpoints per router, is $1/4^{\text{th}}$ of total ports ($1/3^{\text{rd}}$ of the network radix). 
{For Fat-tree and Dragonfly, we use the built-in BookSim constructions and routing~\cite{jiang2013detailed}. Booksim's Fat-tree topology for router radix $2p$ and $3$ layers has $p^2=324$ routers in each layer}, with top layer routers having half the radix~($18$). This configuration supports $p^3=5,832$ endpoints.
In Megafly and Fat-tree, half and one-third of the routers, respectively, have endpoints on half of their ports.
\subsection{Minimal Path Computation in PolarStar}\label{sec:minpath}
We use the properties of structure graph $G$ and supernode $G'$ to compute each minimal path (minpath) in PolarStar. This significantly reduces the storage
requirements of the router compared to table-based routing which stores output port along minpath towards every router in the network. 

The structure graph in PolarStar is an \emph{Erd\H os-R\'enyi} polarity graph with diameter $D=2$ and has Property~\ref{prop_R}.
We describe minpaths in Inductive-Quad supernode that satisfies Property~\ref{prop_R_star}; computation for a Paley supernode is similar. 
Our minpath computations are amenable to incremental routing and therefore, suitable for \emph{destination-based routing}.

If the source and destination are adjacent, minpath is just the edge between them. 
Next, we show that when source and destination are in adjacent supernodes, there is a 2-hop minpath between them. These paths are graphically illustrated in Figure~\ref{fig:diameter_theorem} when routing from $(a_1,f(x'))$ to $(y,y')$.
For clarity and alignment with Figure~\ref{fig:diameter_theorem}, we assume $(a_1,f(x'))$ is the source, and $(y,y')$ is the destination.
We compute the minimal paths in this case by individually evaluating all scenarios of Property~\ref{prop_R_star}:
\begin{enumerate}[label=(\alph*),itemsep=0pt,parsep=2pt]
    \item $y'=x' \rightarrow$ $(a_1,f(x'))$ and $(y,x')$ are adjacent.
    \item $y'=f(x')\rightarrow$ We do not hop to supernode $y$ directly from the source. 
    From Property~\ref{prop_R}, there is a 2-hop path $(a_1,b_1,y)$ in structure graph $G$.
    We take the corresponding path in $G_*$ which is $\left((a_1,f(x')), (b_1,x'), (y,f(x'))\right)$, see Figure~\ref{fig:diameter_p1}. 

    \item $(y',x')\in E(G')\rightarrow$ we use the path \newline
    $\left((a_1,f(x')), (y,x'), (y,y')\right)$, as shown in Figure~\ref{fig:diameter_p2}.

    \item $(f(y'),f(x'))\in E(G')\rightarrow$ we use the path \newline$\left((a_1,f(x')), (a_1,f(y')), (y,y')\right)$, as shown in Figure~\ref{fig:diameter_p3}.   
\end{enumerate}

Lastly, we consider the case when source and destination are neither adjacent nor in adjacent supernodes. 
This is also shown in Figure~\ref{fig:diameter_theorem} where source is $(x,x')$ and destination is $(y,y')$.
From Property~\ref{prop_R}, there exists a 2-hop path $(x,a_1,y)$ in $G$. In $G_*$, we always hop from $(x,x')$ to $(a_1,f(x'))$. 
Since $(a_1,f(x'))$ is in a supernode adjacent to $(y,y')$, we then take the 2-hop path to $(y,y')$ as described above, giving an overall path of length at most 3.

\subsection{Routing}\label{sec:routing}
We use the following well-known routing schemes:
\begin{itemize}[itemsep=0pt,parsep=2pt,leftmargin=*]
    \item \emph{Minimal Routing~(MIN):} Each packet between the source and destination is routed along a shortest path. In MF, we use the path diversity between routers within the same supernode~(group).
    We saw poor performance on SF and BF when using a single minpath per router pair, so for these we store all minpaths for every destination in a large routing table. 
   HX also uses all minpaths but computes them by aligning coordinates in each dimension, so
   does not require large routing tables~\cite{ahn2009hyperx}.
   For PolarStar, minpath is  
    analytically computed as described in Section~\ref{sec:minpath}. 
    It only stores shortest paths in the structure graph and requires significantly less memory compared to SF and BF.   
        
    \item \emph{Load-balancing Adaptive Routing~(UGAL):} 
    Valiant routing balances the load by misrouting each packet
    via a randomly chosen intermediate router. In our 
    implementation, we sample
    $4$ intermediate routers at random for misrouting. We predict path latency for valiant paths and minpaths using local buffer occupancy, and select the smallest latency path. 
    Note that UGAL routing in SF, BF and HX uses multiple minpaths stored in routing tables.
\end{itemize}

\subsection{Simulation Setup}\label{sec:simulation}
We analyze network performance using the cycle-accurate BookSim simulator~\cite{jiang2013detailed}. {Simulation parameters~(latency, bandwidth) are normalized to the values of a single link.} Credit-based flow-control is used to create backpressure and restrict injection for congestion-control.
Packets are of size $4$ flits and 
input-queued routers have $128$ flit buffers per port and $4$ virtual channels~(VCs). {Dragonfly and Megafly respectively use $2$ and $1$ VCs for MIN routing, and $3$ and $2$ VCs for UGAL routing.}
A warm-up phase precedes simulations so the network may reach steady-state before measurements.

{We analyze network performance for synthetic traffic patterns that represent crucial applications.
Such patterns are widely used to evaluate network topologies~\cite{dally08, lei2020bundlefly, polarfly_sc22, ahn2009hyperx, besta2014slim}.}

\begin{enumerate}[itemsep=0pt,parsep=2pt,leftmargin=*]
    \item \emph{Uniform} random traffic -- the destination of each packet is selected uniformly at random~(represents graph processing
and irregular algorithms e.g. sparse linear algebra~\cite{dang2018lightweight, besta2015accelerating}).

\item \emph{Random permutation} traffic -- a fixed permutation 
mapping $\tau$ of source to destination \emph{routers} is chosen uniformly at random. 
Endpoints on a router $R_s$ transmit only to corresponding endpoints on router
$\tau(R_s)$. This pattern also emulates permutation traffic
for co-packaged nodes with both compute and router.
{Permutation traffic is commonly seen in FFT, physics simulations and collectives~\cite{besta2014slim}. 
The random permutation pattern represents traffic generated by these applications when process IDs are randomly mapped to nodes.

\item \emph{Bit Shuffle} traffic -  the destination address 
is obtained by shifting the source address bits to the left by 1~($d_i = s_{(i-1) \mod b}$). This pattern is common in FFT and sorting algorithms~\cite{bahn2008generic}.
\item \emph{Bit Reverse} traffic - the destination address 
is obtained by reversing the bit order of source address~($d_i = s_{b - i - 1}$). This pattern occurs in Cooley-Tukey FFT, binary search and dynamic tree data structures~\cite{wilber1989lower, sarbazi2001communication, gold1969digital}.}
\end{enumerate}
Bit Shuffle and Bit Reverse traffic uses $2^b$ endpoints, where $2^b$ is the largest power of two no more than the total endpoints.
Endpoint IDs are contiguous for each router and supernode/group in hierarchical topologies~(\newflyN, Bundlefly, Dragonfly, Megafly, Fat-tree). In such topologies, almost all endpoints in a supernode communicate with only two other supernodes under Bit Shuffle traffic.
\subsection{Results}\label{sec:perf_results}
Figure~\ref{fig:topologie_comparison} compares \newfly performance against the baseline topologies, for different routing schemes and traffic patterns. Labels follow the scheme $\texttt{<topology>-<routing>}$. Load is normalized to the peak injection bandwidth. 

Overall, PS-Pal and PS-IQ perform competitively for most of the patterns.
With MIN routing, PS-Pal and PS-IQ sustain more than $75\%$ of full injection bandwidth~(load) on uniform traffic. 
SF and BF use all minpaths available and have marginally better performance, but require significantly higher storage for routing tables than PS-*.

\begin{figure}[ht]
    \centering
    \begin{subfigure}{.5\linewidth}
      \centering
      \includegraphics[width=\linewidth]{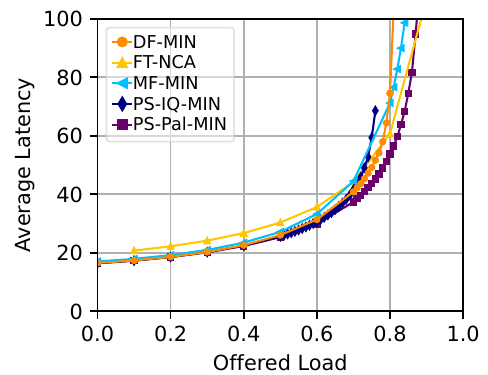}
      \caption{\begin{scriptsize}Uniform traffic, MIN routing.\end{scriptsize}}
      \label{fig:topologie_comparison_sub1}
    \end{subfigure}%
    \begin{subfigure}{.5\linewidth}
      \centering
      \includegraphics[width=\linewidth]{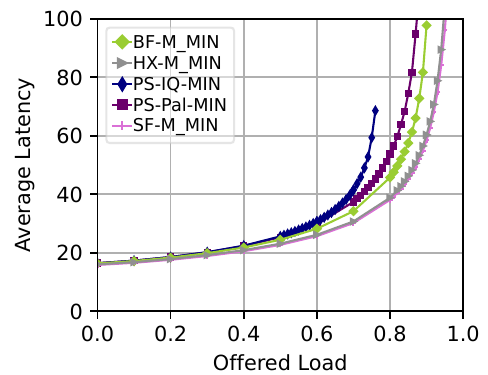}\caption{\begin{scriptsize}
          Uniform traffic, MIN routing.
      \end{scriptsize}}
      \label{fig:topologie_comparison_sub2}
    \end{subfigure}\\%
    \begin{subfigure}{0.5\linewidth}
      \centering
      \includegraphics[width=\linewidth]{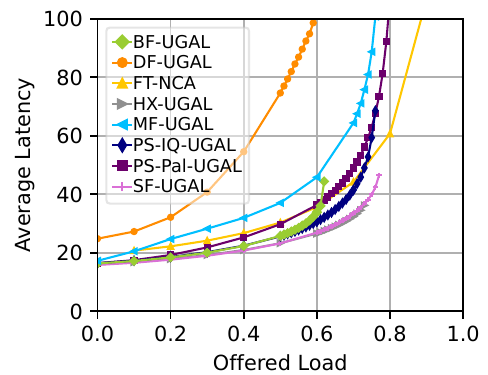}
      \caption{\begin{scriptsize}
          Uniform traffic, UGAL routing.
      \end{scriptsize}}
      \label{fig:topologie_comparison_sub3}
    \end{subfigure}%
    \begin{subfigure}{.5\linewidth}
      \centering
      \includegraphics[width=\linewidth]{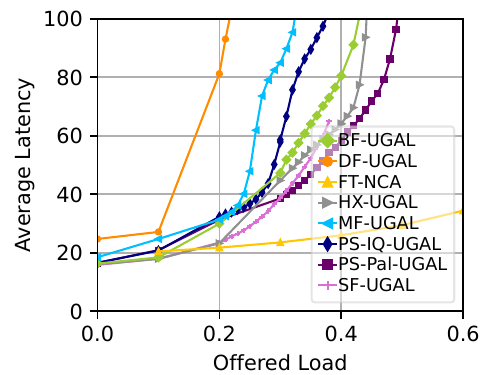}
      \caption{\begin{scriptsize}
          Random Permutation traffic.
      \end{scriptsize}}
      \label{fig:topologie_comparison_sub4}
    \end{subfigure}\\
    \begin{subfigure}{0.5\linewidth}
      \centering
      \includegraphics[width=\linewidth]{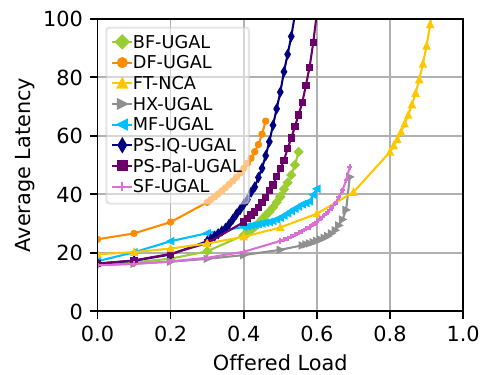}
      \caption{\begin{scriptsize}
      Bit-reverse traffic.\end{scriptsize}}
      \label{fig:topologie_comparison_sub5}
    \end{subfigure}%
    \begin{subfigure}{.5\linewidth}
      \centering
      \includegraphics[width=\linewidth]{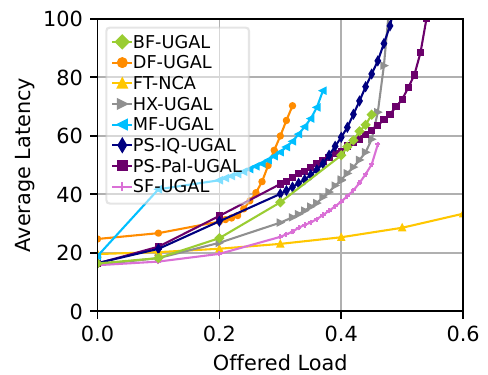}
      \caption{\begin{scriptsize}
          Bit-shuffle traffic.
      \end{scriptsize}}
      \label{fig:topologie_comparison_sub6}
    \end{subfigure}
\caption{Performance comparison of 
several topologies. Latency is measured up to the highest injection rate for which simulation is stable. Beyond this, the network is saturated and 
latency increases with simulation time. 
In SF and BF, each routing table 
stores all minpaths to every destination. Unlike other topologies, they exhibit subpar performance when a single minpath is used.
}
\label{fig:topologie_comparison}
\Description{}
\end{figure}

With adaptive UGAL routing, PS-Pal and PS-IQ sustain between $0.4$ to $0.6$ of the full load on various traffic patterns. Their performance is  significantly better than DF and is comparable to BF (also a star-product), even though BF stores multiple minpaths for UGAL.

{Bit Shuffle performance of MF and DF is poor because they only have one link between each pair of supernodes~\cite{flajslik2018megafly}, as opposed to BF and PS-* which have multiple links between the supernodes. This pattern thus highlights the benefits of star-product topologies over DF and MF. Comparatively, MF performs better on Bit Reverse pattern which has more balanced load distribution. 
}
{For most traffic patterns, HX and SF sustain the highest injection rate. 
They have a high degree of symmetry, and 
high link density. However, they 
lack scaling efficiency~(Figure~\ref{fig:MB_comparison}) and have $6.7\times$ and $12.8\times$ geometric mean lower scale than \newflyN, respectively. 
}
\begin{figure}[ht]
    \centering
    \begin{subfigure}{.5\linewidth}
      \centering
      \includegraphics[width=\linewidth]{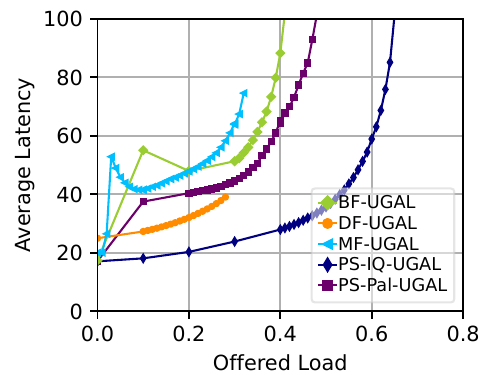}
      \label{fig:rand_endpt_min}
    \end{subfigure}%
    \begin{subfigure}{.5\linewidth}
      \centering
      \includegraphics[width=\linewidth]{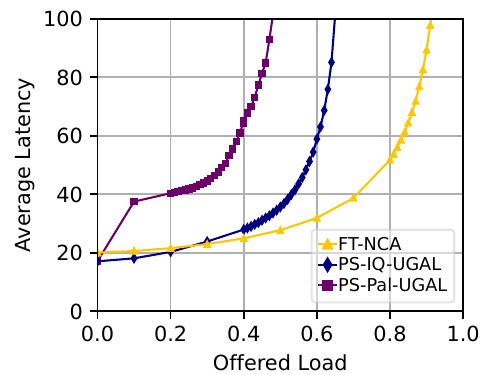}
      \label{fig:rand_endpt_mmin}
    \end{subfigure}
\caption{Performance of  topologies under adversarial traffic.}
\label{fig:adversarial}
\end{figure}
\subsection{Adversarial Traffic}\label{sec:adversarial}
For hierarchical topologies -- PS-*, BF, DF and MF, we implement an adversarial pattern where all endpoints in a supernode transmit to endpoints in only one other supernode, 
resulting in congestion on global~(inter-supernode) links.  
For every source and destination pair, we enforce the longest possible minpath~($3$-hops in PS-*, BF, DF and MF, and $4$-hops in FT), and in PS-* and BF, also the maximum number of global hops~($3$ in PS-*, $2$ in BF).
This traffic pattern has been used for worst-case analysis of BF, DF and MF~\cite{lei2020bundlefly, dally08, flajslik2018megafly}.
This may not be the worst-case pattern for PS-*. The true worst-case may depend on the routing algorithm and the minimum bisection, which is NP-hard to compute for generic graphs.

Figure~\ref{fig:adversarial} shows network performance for this traffic pattern. 
DF and MF saturate at the lowest bandwidth 
with only 
a single link between each supernode pair. 
BF and PS-* perform better because they have multiple links between every supernode pair. PS-IQ performance is better than BF and PS-Pal because of a relatively larger proportion of global links in the particular configuration. 
\section{Evaluation: Real-world Motifs}
\subsection{Simulation Setup}
To compare the performance of topologies on real-world motifs, we use the SST simulator~\cite{rodrigues2011structural} and the network configurations listed in Table~\ref{table:config}, and
construct topologies using SST's Merlin library~\cite{hemmert2018merlin}.

We compare PolarStar (PS-IQ) against the built-in Dragonfly, HyperX and Fat-tree modules in Merlin. 
We evaluate both MIN and adaptive UGAL routing mechanisms for these topologies. For HyperX, Merlin provides DOAL routing which adaptively routes at most once in each dimension. For simulation, we set router buffer size to $64$ KB per port, router and link latencies to $20$ ns and link bandwidth to $4$ GBps.

We use two motifs obtained from the Ember Communication Pattern Library~\cite{hammond2015ember} of SST:
\begin{itemize}[itemsep=0pt,parsep=1pt,leftmargin=*]
    \item \emph{Allreduce} -- An important collective in scientific computing, linear solvers and distributed Machine Learning~\cite{prikopa2016parallel, rabenseifner2004optimization, sergeev2018horovod}. 

    \item \emph{Sweep3D} -- This is a wavefront communication pattern moving diagonally on a 2D processor grid. It stresses network latency and is commonly seen in particle simulations and iterative solvers~\cite{hoisie2007performance}.
\end{itemize}
Message size is $64$KB for Allreduce. For both motifs, we measure time for $10$ iterations. Process IDs are mapped linearly to endpoints.

\subsection{Results}

\begin{figure}[!ht]
    \centering
    \begin{subfigure}{.49\linewidth}
      \centering
      \includegraphics[width=0.9\linewidth]{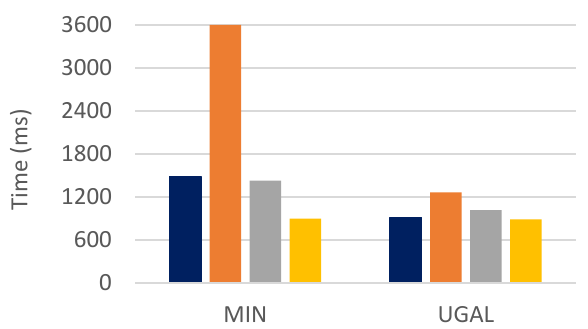}
    \caption{Allreduce Collective}
    \label{fig:sst_allreduce}
    \end{subfigure}
    \begin{subfigure}{.5\linewidth}
      \centering      \includegraphics[width=1.07\linewidth]{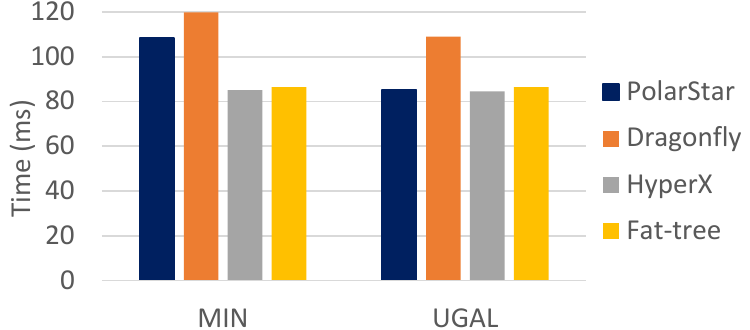}
      \caption{Sweep3D motif}
    \label{fig:sst_sweep}    \end{subfigure}
\caption{Performance of the Allreduce and Sweep3D motifs.}
\label{fig:allred_sweep}
\Description{}
\end{figure}

Allreduce executes fastest on Fat-tree with similar performance on both MIN and adaptive UGAL routing. On the other hand, UGAL performs significantly better than MIN on PolarStar, Dragonfly and HyperX. PolarStar execution time on Adaptive routing is comparable to Fat-tree and better than HyperX. It achieves $2.4\times$ and $1.4\times$ speedup over Dragonfly with MIN and UGAL routing. 
On Sweep3D, PolarStar is marginally faster than Dragonfly with MIN routing. On UGAL routing, PolarStar's performance is similar to HyperX and Fat-tree, and $1.28\times$ better than Dragonfly. The improvement over MIN routing is smaller than that seen on Allreduce. 

Overall, PolarStar performs better than Dragonfly and comparable to Fat-tree and HyperX. However, it has significantly higher Moore-bound efficiency than HyperX~(Figure~\ref{fig:MB_comparison}) and can achieve similar scale with smaller switch radix~(Table~\ref{table:config}).

\section{Structural Analysis}\label{sec:structural_analysis}
\subsection{Bisection Analysis}\label{sec:bisection}
\begin{figure}[!htbp]
    \centering
    \includegraphics[width=0.9\linewidth]{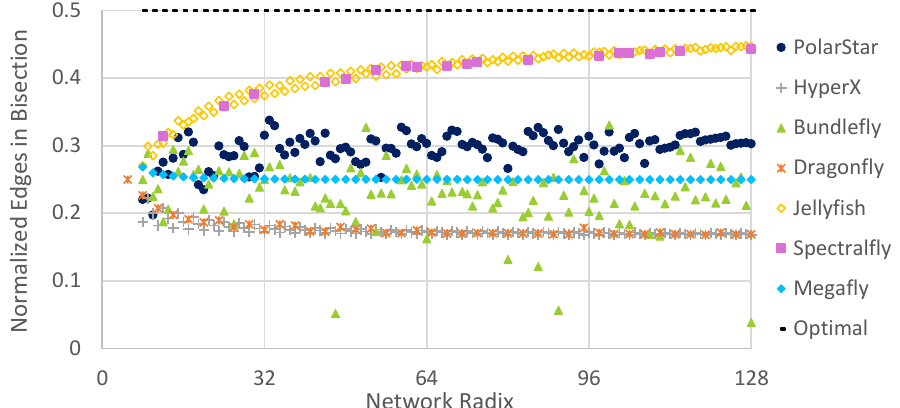}
    \caption{Fraction 
    of links crossing the minimum bisection estimated by METIS~\cite{METIS}.  \newflyN, 
    Bundlefly and Spectralfly use their largest feasible (diameter-3) constructions for each radix.
    Jellyfish has the same radix and scale as \newflyN.
    Fat-tree and Megafly bisection is normalized by 
    the network links incident with routers that have attached endpoints.}
    \label{fig:bisection_baselines}
\Description{}
\end{figure}

Figure~\ref{fig:bisection_baselines} shows the minimum bisection of
different topologies for network radix in range $[8, 128]$. {The
minimum bisection is estimated using METIS~\cite{METIS} for \newflyN, Spectralfly, Megafly, Bundlefly and Dragonfly. We also compare against Jellyfish which uses a random graph topology~\cite{Singla:2012:JND:2228298.2228322}.} 
Among the direct networks, Jellyfish has the highest fraction of links in bisection due
to the random connectivity between vertices, though its diameter is more than $3$. 
{Spectralfly uses Ramanujan graphs that optimize the expansion properties. 
Hence, it has a large bisection~(comparable to Jellyfish), but very few feasible diameter-3 constructions. 
Among the other diameter-3 topologies, \newfly has the \emph{highest} proportion of links crossing the
bisection, with an average of 
$29.6\%$ across all radixes. 
Comparatively, Bundlefly, Dragonfly, HyperX and even the indirect Megafly
only have $22.9\%, 17.8\%, 17.4\%$ and $25.5\%$ links in the bisection cut, respectively.}
The improved bisection cut can be attributed to the
(a)~large bisection of the $ER$ topology of the structure graph~\cite{polarfly_sc22}, and (b)~good radix distribution across 
supernode and structure graphs
due to many choices for supernode radix.

\begin{figure}[htbp]
    \centering
    \includegraphics[width=0.7\linewidth]{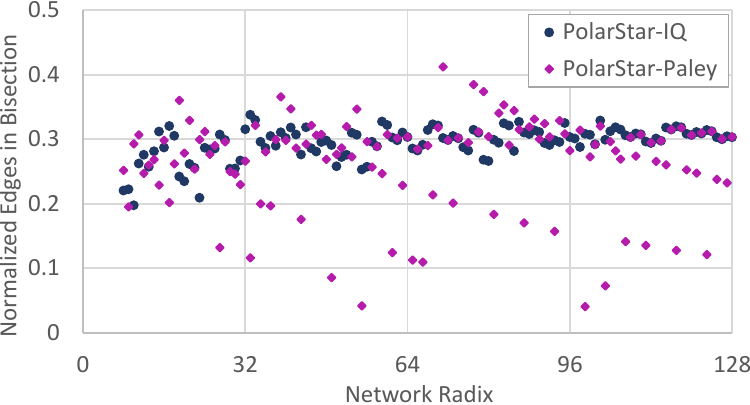}
    \caption{Minimum bisection of \newfly with \newSuper and Paley supernodes, approximated by METIS~\cite{METIS}.
    }
    \label{fig:bisection_polarstar}
    \Description{}
\end{figure}

Figure~\ref{fig:bisection_polarstar}
shows the size of bisection cut of \newfly as a function of radix and
supernode topologies.
\newfly with \newSuper and Paley supernodes have an average 
$29.5\%$ and $26.6\%$ edges in the bisection cut, respectively. The former also offers a more stable bisection across a range of radixes. This is because \newSuper has more 
feasible radixes and allows better distribution of radix between structure graph and supernode. 
Comparatively, the limited choice of radixes for Paley graphs may 
result in a \newfly with large supernodes and small 
structure graphs. Such a network
will have small bisection because most of the links are concentrated within 
dense supernode subgraphs.

\begin{figure}[!htbp]
    \centering
    \includegraphics[width=0.9\linewidth]{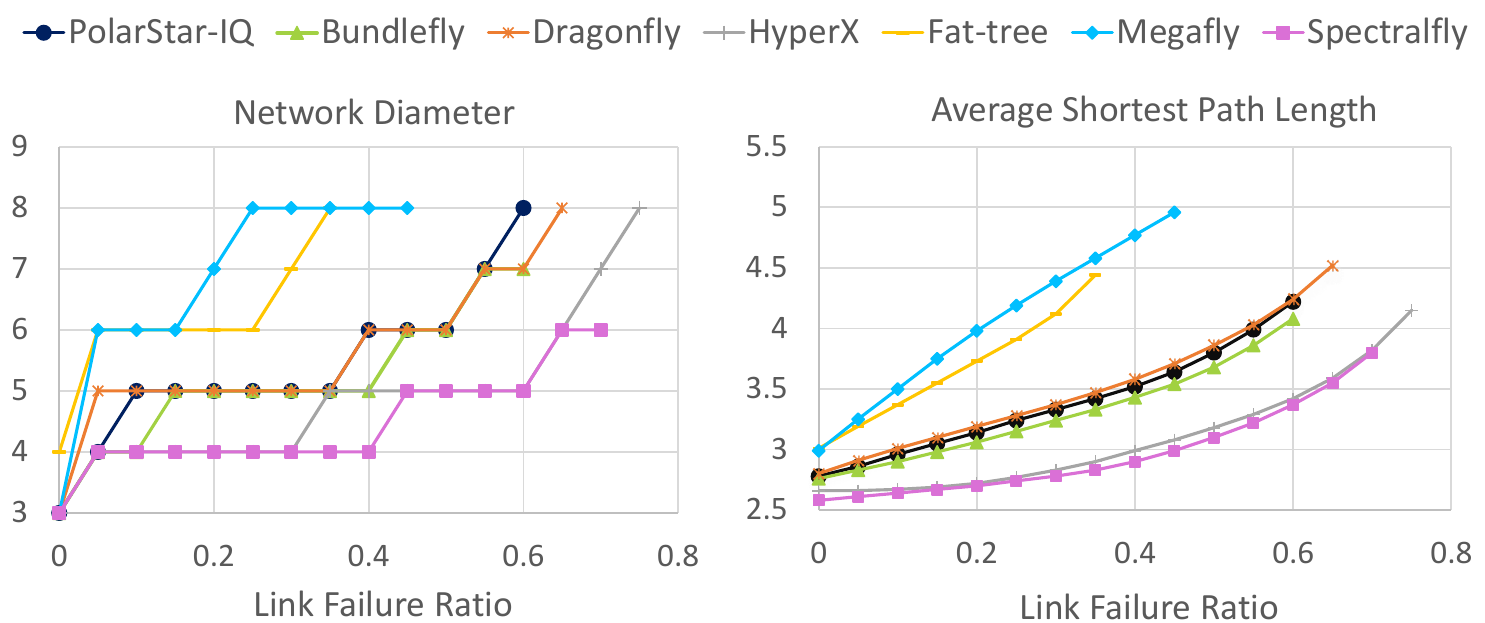}
    \caption{Network Diameter and Average Path Length as a function of random link failures. For Fat-tree and Megafly, we only consider the distance between nodes that have endpoints. PolarStar-Paley behavior is similar to PolarStar-IQ and is not shown for clarity. 
    }    \label{fig:resilience}
    \Description{}
\end{figure}

\subsection{Fault Tolerance}\label{sec:resilience}
To estimate fault tolerance, we simulate
$100$ random link failure scenarios until disconnection.
We randomly select a simulation with median disconnection ratio,
and 
report the variation in network diameter and average shortest 
path length in Figure~\ref{fig:resilience}.

\newfly and Bundlefly have similar
resilience with a
$60\%$ disconnection ratio. Dragonfly has a higher $65\%$ disconnection ratio. However, at low failure ratios, Dragonfly's diameter and average shortest path length increase rapidly. This is likely because if a global
link fails, shortest paths between corresponding groups have to traverse other groups.
HyperX and Spectralfly are the most resilient
of diameter-3 topologies due to higher connection density, 
although they suffer from poor scalablity~(Figure~\ref{fig:MB_comparison}). 
{All direct topologies in Figure~\ref{fig:resilience} have a higher disconnection ratio than the indirect topologies Fat-tree and Megafly. Like Dragonfly, Megafly has only one global link between each pair of groups. Therefore, its diameter increases to $6$ with just $5\%$ failed links, and its average shortest path length increases faster than the Fat-tree.}


\clearpage
\bibliographystyle{ACM-Reference-Format}
\bibliography{references}
\end{document}